\documentclass[11pt]{article}

\usepackage[margin=1in]{geometry}
\usepackage[T1]{fontenc}
\usepackage{amsmath,amssymb,amsthm,mathtools}
\usepackage{braket}
\usepackage{graphicx}
\usepackage{xcolor}
\usepackage{tikz}
\usepackage{booktabs}
\usepackage{enumerate}
\usepackage{microtype}
\usepackage[colorlinks=true,linkcolor=blue,citecolor=blue,urlcolor=blue]{hyperref}
\usepackage[capitalise,nameinlink]{cleveref}

\usetikzlibrary{calc}

\newtheorem{theorem}{Theorem}[section]
\newtheorem{lemma}[theorem]{Lemma}
\newtheorem{proposition}[theorem]{Proposition}

\theoremstyle{definition}
\newtheorem{definition}[theorem]{Definition}
\newtheorem{conjecture}[theorem]{Conjecture}
\theoremstyle{remark}

\newcommand{\E}{\mathbb{E}}

\newcommand{\ALG}{\operatorname{ALG}}
\newcommand{\OPT}{\operatorname{OPT}}
\newcommand{\cL}{\mathcal{L}}
\newcommand{\cP}{\mathcal{P}}
\newcommand{\cI}{\mathcal{I}}

\newcommand{\R}{\mathbb{R}}
\newcommand{\C}{\mathbb{C}}
\newcommand{\N}{\mathbb{N}}

\newcommand{\keywords}[1]{}

\title{Online geometric matching and segments block points}
\author{Yaqiao Li\footnote{liyaqiao@suat-sz.edu.cn}  \ \ 
Junyi Lin\footnote{suat25000108@stu.suat-sz.edu.cn}  \\
Shenzhen University of Advanced Technology,  Shenzhen, China}

\date{\today}

\begin{document}

\maketitle

\begin{abstract}
Given a sequence of online points in general position in the plane, the geometric matching problem asks to match as many points as possible, subject to segment crossing constraints. We initiate the study of online geometric matching with limited crosses. In the $k$-crossing model each matched segment is allowed to have at most $k$ crosses, where $k=0$ corresponds to the online noncrossing matching model for which Bose et al. proved that Greedy is an optimal deterministic algorithm achieving a competitive ratio of $2/3$. In the triangle-free-crossing model no three matched segments are allowed to pairwise cross each other. Interestingly, we show in both cases the performance of Greedy does not improve by allowing crosses, and in fact, becomes worse in the $k$-crossing setting for $k\ge 1$. In the triangle-free-crossing model, we propose a simple two-layer-Greedy deterministic algorithm achieving a competitive ratio $4/5$. More interestingly, we propose and study the underlying segments blocking points extremal problem, and discuss its connection to vertex cover numbers and Ramsey numbers, with interesting open problems that might be of independent interest.

We revisit the noncrossing model. We give a randomized algorithm which is an asymmetric version of Kamali et al., improving the competitive ratio from $0.66952$ to $0.67034$. We also show, via a conceptually simple construction, that no randomized algorithm can match more than $6/7$ fraction of points, improving the previous $16/17$ by Boyar et al.

\keywords{}
\end{abstract}

\section{Introduction}\label{sec:introduction}

This work generalizes the online noncrossing matching problem to the more general online geometric matching problem, where limited crosses are allowed. The online \emph{geometric} matching problem differs from online  matching of a \emph{graph}:
for the geometric problem, the matching needs to satisfy a certain geometric property, while for the graph problem, the matching needs to satisfy a combinatorial property (i.e., be a subset of predetermined edges). For online matching of a graph, see a recent survey by Huang et al.~\cite{huang2024online}. 

The online geometric matching problem is defined as follows. An online instance is a finite sequence
$\cI=(v_1,\ldots,v_n)$ of points in $\R^2$ in general position; in particular no three input points are collinear, no three lines intersect at one point, etc.  The value of $n$ is not announced in
advance.  When $v_i$ arrives, an online algorithm may match it to one earlier
unmatched point or leave it unmatched.  A match is represented by the closed
straight-line segment between its endpoints, and all decisions are
irrevocable.  Throughout, for two points $u$ and $v$, we write $uv$ for the closed line segment joining them.  The matched segments form a matching denoted by $\cL$. For a given geometric property, one asks: what is the largest asymptotic competitive ratio that can be achieved by an online algorithm respecting the given property? 

In this work, the property we consider is whether two line segments cross or not, i.e., whether their  interiors
intersect. 
The \emph{crossing graph} $G(\cL)$ has one vertex for every segment of $\cL$
and one edge for every crossing pair.  We consider two scenarios.
\begin{itemize}
\item In the \emph{$k$-crossing} model, every vertex of $G(\cL)$ has degree at
      most the fixed integer $k\geq0$.
\item In the \emph{triangle-free-crossing} model, $G(\cL)$ is triangle-free;
      equivalently, no three matching segments pairwise cross.
\end{itemize}
The $0$-crossing model is precisely the \emph{noncrossing} model.  For $k=0,1$, every
$k$-crossing matching is triangle-free-crossing.  For $k\geq2$, the two
crossing models are incomparable: a triangle-free crossing graph may have
arbitrarily large maximum degree, whereas a $k$-crossing graph may contain a
triangle.

Let $\ALG$ denote an arbitrary online algorithm, and $\OPT$ an optimal offline algorithm, respectively. For a given instance $\cI$, let $\ALG(\cI)$ and $\OPT(\cI)$ denote the number of matched points, respectively. We are only concerned about the asymptotic performance of algorithms, so without loss of generality we assume $n$ is even. Because a minimum-total-length perfect matching is noncrossing, we have $\OPT(\cI) = n$ for all three models we consider. If $\cL$ is the matching obtained by $\ALG$, then 
$\ALG(\cI)=2|\cL|$, where $|\cL|$ denotes the number of matched segments.
A deterministic algorithm has asymptotic competitive ratio at least $\rho$ if
a constant $C$ exists such that
$\ALG(\cI)\geq\rho\OPT(\cI)-C$ for every input $\cI$.  For a randomized
algorithm the corresponding requirement is
$\E[\ALG(\cI)]\geq\rho\OPT(\cI)-C$, where the expectation is over the
algorithm's internal randomness.  Its adversary is oblivious: the entire
input, including its length, is fixed before the random bits are exposed.
Since $\OPT(\cI)=n$ in all three models, the asymptotic competitive ratio coincides with the asymptotic fraction of input points matched (or the expected fraction for a randomized algorithm).

A deterministic algorithm is \emph{Greedy} if it always matches the current
point whenever that point can be matched to an existing unmatched point.  Its
tie-breaking rule is arbitrary.  

\subsection{Our contribution}   \label{sec:contri}

We defer a more detailed discussion of related work and motivation to the next subsection, but only mention the following that is most directly related to this work: Bose et al.~\cite{bose2020non} initiated the online noncrossing matching problem and
proved that Greedy has competitive ratio at least $2/3$, which is optimal among deterministic algorithms.  Kamali, Nikbakht, and Sajadpour
~\cite{Kamali2022randomizedNM} gave the first randomized algorithm with expected competitive ratio
$235/351\approx0.66952$.  Boyar et al.~\cite{boyar2024online} proved that no randomized algorithm can have competitive ratio greater than $16/17$.  Our main contributions are:

\begin{itemize}
    \item We initiate the study of the $k$-crossing and the triangle-free-crossing matching. Intuitively, by allowing crosses it seems Greedy may match more points. We show this is false: for $k$-crossing matching Greedy has a competitive ratio of at most $4/(6+k)$, which becomes strictly worse compared to the noncrossing model for every $k \ge 1$. For triangle-free-crossing matching Greedy has a competitive ratio of at most $2/3$, and we show this by presenting an elegant construction. Surprisingly, proving a matching lower bound seems difficult and is left as the main open problem of this work. Furthermore, for triangle-free-crossing matching, we introduce a simple deterministic two-layer-Greedy online algorithm with competitive ratio $4/5$ which strictly improves Greedy's performance in the noncrossing setting.
    
    Perhaps more interestingly, the performance of Greedy in both settings can be understood through a common geometric extremal problem. In particular, for the triangle-free-crossing matching, our study of the corresponding geometric extremal problem indicates a potential connection to Ramsey theory.  

    \item We give a refined randomized algorithm based on ~\cite{Kamali2022randomizedNM} with an expected competitive ratio $0.67034$, improving their $0.66952$. We also show, via a conceptually simpler construction than that of Boyar et al.~\cite{boyar2024online}, that no randomized algorithm can have competitive ratio greater than $6/7 \approx 0.85714$, improving their $16/17 \approx 0.94118$.  
\end{itemize}

We proceed to define the mathematical extremal problem in its general form, which we believe may also be of independent interest.

\begin{definition}\label{def:blocking}
Let $\cL$ be a family of line segments having some geometric property $\mathsf P$. Let $\cP$ be a set of points such that its points and the endpoints
of $\cL$ are jointly in general position.  The family $\cL$ \emph{blocks} $\cP$, if $\cL\cup \{uv\}$ violates $\mathsf P$ for every two distinct points
$u,v\in\cP$.
Write
\[
                         \ell=|\cL|,\qquad p=|\cP|.
\]
The blocking number $\beta_{\mathsf P}(\cL)$ is the largest cardinality of a
point set blocked by $\cL$, and define
\[
 \beta_{\mathsf P}(\ell) :=
 \max\{\beta_{\mathsf P}(\cL):|\cL|=\ell,
                 \ \cL\text{ has property $\mathsf P$}\}.
\]
\end{definition}

In this work, $\mathsf P$ is either the $k$-crossing, or the
triangle-free-crossing property.  We write $\beta_k$ for the $k$-crossing
property, including $k=0$, and $\beta_\triangle$ for triangle-free crossing. See Figure \ref{fig:triangle-blocking} for an illustration, which shows $\beta_\triangle(5) \ge 4$.

\begin{figure}[ht!]
\centering
\includegraphics[width=.4\linewidth]{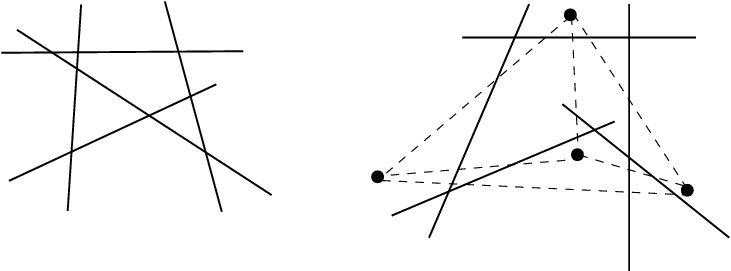}
\caption{Left: a configuration of segments that is \emph{not} triangle-free. Right: a triangle-free-crossing family of five segments blocks four points, hence $\beta_\triangle(5) \ge 4$.}
\label{fig:triangle-blocking}
\end{figure}

The connection between segments blocking points and Greedy algorithms is exhibited in the following. We remark that this connection can be stated for any other geometric property $\mathsf P$, but we need to determine the corresponding $\OPT$ for the property $\mathsf P$. We only state this connection for the properties that concern us in this work, because $\OPT$ in our case is particularly simple.

\begin{theorem} \label{thm:block-to-greedy}
    Let $\mathsf P$ be $k$-crossing or triangle-free-crossing. 
    \begin{enumerate}[(1)]
        \item If $\beta_{\mathsf P}(\ell) \le p$, then on every input on which Greedy creates $\ell$ segments, Greedy matches at least $2\ell /(2\ell + p)$ fraction of the input points.

        \item If $\beta_{\mathsf P}(\ell) \ge p$, then there exists an input on which Greedy creates $\ell$ segments and matches at most $2\ell /(2\ell + p)$ fraction of the input points.
    \end{enumerate}
\end{theorem}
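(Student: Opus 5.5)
For part (1), I will look at the final state of Greedy on an arbitrary input $\cI$. Suppose Greedy creates a family $\cL$ of $\ell$ segments, and let $\cP$ be the set of input points that remain unmatched at the end. Then $n=2\ell+|\cP|$, and the matched fraction is $2\ell/(2\ell+|\cP|)$. Since this expression is decreasing in $|\cP|$, it suffices to show $|\cP|\le \beta_{\mathsf P}(\ell)\le p$.

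The key claim is that the final family $\cL$ blocks $\cP$. Take two distinct unmatched points $u,v\in\cP$, and say $v$ arrived after $u$. When $v$ arrived, $u$ was already present and unmatched, and Greedy still left $v$ unmatched. So $v$ could not be matched to $u$: the segments existing at that moment, $\cL_t\subseteq\cL$, together with $uv$, violate $\mathsf P$.

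Both properties are monotone, because violating $k$-crossing (some segment having more than $k$ crosses) or violating triangle-freeness (three pairwise crossing segments) persists when segments are added. Hence $\cL\cup\{uv\}$ also violates $\mathsf P$. General position is inherited from the input. Finally, $\cL$ itself has property $\mathsf P$, because Greedy always maintains a valid matching. So $\cL$ is admissible in the definition of $\beta_{\mathsf P}(\ell)$, which gives $|\cP|\le\beta_{\mathsf P}(\cL)\le\beta_{\mathsf P}(\ell)\le p$.

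For part (2), I will take a family $\cL$ with $|\cL|=\ell$, property $\mathsf P$, and a point set $\cP$ with $|\cP|\ge p$ blocked by $\cL$. Discarding points preserves blocking, so I may assume $|\cP|=p$.

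The plan is to build the input as follows. First present the $2\ell$ endpoints of $\cL$, in such a way that Greedy is forced to produce exactly the segments of $\cL$. Then present the points of $\cP$ in any order. Each new point $v\in\cP$ cannot be matched to an earlier point of $\cP$, since $\cL\cup\{uv\}$ violates $\mathsf P$. It also cannot be matched to an endpoint of $\cL$, since all of those are already matched. So Greedy matches nothing further, and the fraction is exactly $2\ell/(2\ell+p)$.

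The main obstacle is forcing Greedy to produce exactly $\cL$ in the first phase. I will present the endpoints in pairs: for each segment $ab\in\cL$, present $a$ and then immediately $b$. When $a$ arrives, every earlier point is already matched, so Greedy cannot match $a$. When $b$ arrives, the only unmatched earlier point is $a$, and adding $ab$ keeps the family a subfamily of $\cL$, which has $\mathsf P$ by monotonicity. So Greedy must match $b$ to $a$, regardless of tie-breaking.

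This gives the input sequence (endpoints of $\cL$ in pairs, then $\cP$), and the claimed bound follows. Since the problem statement requires points in general position, this is exactly the joint general-position hypothesis in \cref{def:blocking}.
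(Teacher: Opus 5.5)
Your proposal is correct and follows the same route as the paper. In (1), the final Greedy family blocks the unmatched set, so $|\cP|\le\beta_{\mathsf P}(\ell)$. In (2), you present the endpoints of a blocking family pair by pair, followed by the blocked points. You also spell out the monotonicity of both properties, which the paper leaves implicit in ``the greedy nature of Greedy.'' That monotonicity is what lets you pass from the family present at $v$'s arrival to the final $\cL$, and what guarantees that every prefix of $\cL$ is valid.
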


\begin{proof}
    (1) At the end of an online instance on which Greedy creates $\ell$ segments, let $\cL$ be the family of segments matched by Greedy and let $\cP$ be the set of unmatched points. Because of the greedy nature of Greedy, $\cL$ blocks $\cP$. Hence,
        $|\cP| \le \beta_{\mathsf P}(\ell) \le p$. 
    The total number of points is at most $2\ell + p$, hence the claimed bound.

    (2) In this case, there exists a family $\cL$ of $\ell$ segments that blocks a set containing at least $p$ points; choose $p$ of those points and call the resulting set $\cP$. Present the $2\ell$ endpoints of $\cL$ pair by pair, with the endpoints of each prescribed segment presented consecutively, followed by the $p$ points of $\cP$. Because $\cL$ satisfies property $\mathsf P$, Greedy matches all $2\ell$ endpoints into the $\ell$ prescribed segments. Since $\cL$ blocks $\cP$, none of the points in $\cP$ can be matched by Greedy. The claim follows.
\end{proof}

Because of Theorem \ref{thm:block-to-greedy}, to understand the performance of Greedy it suffices to study the following extremal problem.

\smallskip

{\noindent\bf Problem:} determine $\beta_{\mathsf P}(\ell)$, and understand the extremal configuration $\cL$.

\smallskip

The rest of the paper is organized as follows. Section~\ref{sec:k-crossing} studies Greedy in the $k$-crossing model, where we determine its exact competitive ratio $4/(k+6)$. Section~\ref{sec:triangle-free} studies Greedy in the triangle-free-crossing model, including the two-layer-Greedy algorithm with ratio $4/5$ and the connection to the vertex cover number and to Ramsey numbers. Section~\ref{sec:noncrossing} revisits the noncrossing model, that is, the $0$-crossing case, where we improve the randomized lower bound of Kamali et al.~\cite{Kamali2022randomizedNM} and prove the $6/7$ upper bound. Section~\ref{sec:open} lists several open problems.

\subsection{Related work}\label{sec:related}

Given $2n$ points in $\R^2$ in general position, the basic noncrossing matching (NM) problem is to find a noncrossing matching with the largest possible number of segments. Observe that the minimum-length Euclidean matching is noncrossing. Hence, a perfect NM always exists. The NM problem and its variants have been extensively studied in the offline setting. 

A closely related line of research relaxes the noncrossing constraint and allows certain crossings. An important problem is to understand the size of a crossing family, that is, matching segments that are pairwise crossing. A breakthrough by Pach et al.~\cite{Pach2019crossingfamily} showed that the largest crossing family has almost linear size. Aichholzer et al.~\cite{Aichholzer2023kcrossing} studied the counting problem of perfect matchings when at most $k$ crossing points are allowed. Note that this differs from our $k$-crossing matching: their $k$ denotes the global number of crossings. In contrast, the online models studied in this paper control the crossings per segment: in the $k$-crossing model every matched segment may cross at most $k$ other segments (Section~\ref{sec:k-crossing}), while in the triangle-free-crossing model no three matched segments pairwise cross (Section~\ref{sec:triangle-free}).

The interest in geometric settings, particularly the Euclidean plane setting, for the matching problem stems from applications in image processing Cohen~\cite{cohen1999finding} and circuit board design Hershberger and  Suri~\cite{hershberger1997efficient}. In such applications, one is often required to construct a matching between various geometric shapes, such as rectangles or circles, representing vertices, using straight-line segments or, more generally, curves. Geometry enters the picture due to constraints on the segments, such as avoiding intersections among the segments.

Hershberger and Suri~\cite{HS1992Alg} gave an algorithm that finds a perfect NM in time $\Theta(n\log n)$. First Atallah~\cite{atallah1985matching} and then Dumitrescu and Steiger~\cite{dumitrescu2000matching} gave efficient algorithms for the bichromatic version of the problem, where the points are divided into two subsets, and matching segments can only be formed between the two subsets. Other versions of the problem considered in the literature include requiring the NM to be stable~\cite{Ruangwises2019StableNM,Hamada2021StableNM2}, requiring two NMs to be compatible (segments in two NMs are also noncrossing if they only share endpoints)~\cite{Aichholzer2009compatibleNM}, and requiring compatible NMs to satisfy a certain diversity constraint~\cite{Misra2022diverseNM}. Noncrossing matching consisting of hyperedges in hypergraphs has also been studied~\cite{kiwi2011generalizations}. 

Several studies considered optimization problems over all NMs. The objective functions include maximizing the sum of the Euclidean lengths of the matching segments~\cite{Alon1993longNM}, minimizing the length of the longest matching segment~\cite{Abu2014bottleneck}, and other similar combinations of minimization and maximization~\cite{Mantas2024Variants}.

At least two works~\cite{Balogh2005WNM,SU2011heavyNM} considered weighted NM on $n$ points, where every point has weight in $\{1,2,\ldots, n\}$. Balogh et al.~\cite{Balogh2005WNM} considered the weight of a segment to be the sum of the weights of the two endpoints modulo $n$, and studied the typical size of an NM with distinct segment weights (this is called noncrossing harmonic matching). Sakai and Urrutia~\cite{SU2011heavyNM} considered the weight of a segment to be the minimum weight of the two endpoints, and they studied lower and upper bounds for the maximum weighted NM.

In pure mathematics, NM has been studied as a tool to understand the representation theory of groups~\cite{Armstrong2013Rep2,Patrias2022Rep}. A tuple of NMs (a so-called necklace) satisfying a specific property is used to study the topology of harmonic algebraic curves associated with a polynomial over $\C$~\cite{Savitt2009polynomials_necklace}. 
Extremal graph problems where an NM of size $k$ plays the role of a forbidden subgraph are studied in~\cite{Aichholzer2010NMasSubgraph,Gyarfas2011Extremal}.

Besides the application in image processing and circuit design, as mentioned in the introduction, 
NM has also found other applications. One major application is in computational biology. A restricted version of NM (the points are all on a circle), and $k$-noncrossing matchings (no $k$ pairwise intersecting segments, which reduces to the standard NM when $k=2$) have been studied to understand RNA structures~\cite{Bafna1995RNA,Chen2009RNA,Vladimirov2013RNA}. 
In applications that are related to visibility problems (such as robotics) and geometric shape matching, one replaces all or a subset of the points in question by geometric objects~\cite{Rappaport2002visibilitymatching,Aloupis2015objects}. For example, when the question is to match objects to objects, then a segment $pq$ between two objects $A$ and $B$ can be formed by choosing arbitrary points $p$ from $A$ and $q$ from $B$, conditioned on the segment $pq$ not crossing other objects.

The online NM has only been studied very recently. Bose et al.~\cite{bose2020non} initiated the study of online (unweighted) NM and showed that the competitive ratio of deterministic algorithms is $2/3$, while Kamali et al.~\cite{Kamali2022randomizedNM} gave a randomized algorithm with expected competitive ratio approximately $0.6695$. The online bichromatic NM has also been studied in~\cite{bose2020non,sajadpour2021non}.  Boyar et al.~\cite{boyar2024online} initiated the study of weighted online noncrossing matching where weights are put on points; among other results, they gave a randomized algorithm that matches in expectation at least $1/3$ fraction of all points even when points have weights, and proved that no randomized algorithm can have competitive ratio greater than $16/17$.  Finally, the advice complexity has also been studied in several works ~\cite{bose2020non,lavasani2023,boyar2024online}.

\section{Greedy in the \texorpdfstring{$k$}{k}-crossing model}\label{sec:k-crossing}

\subsection{Determine $\beta_k(\ell)$}

We first use Euler's formula for planar graphs to establish a cell-counting formula for arrangements of segments in terms of the total number of intersections.

\begin{lemma}   \label{lem:counting_cells}
    Let $\cL$ be a family of $\ell$ line segments with $m$ total intersections.  Then, the arrangement creates at most
        $\ell+m+1$
cells inside a sufficiently large bounding box.
\end{lemma}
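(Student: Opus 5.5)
We turn the arrangement into a plane graph inside a large bounding box $B$ and apply Euler's formula, treating each connected component of the graph separately.

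\textbf{Construction of the graph.} Take $B$ large enough that every segment of $\cL$ lies strictly in its interior. Build a plane graph $G$ as follows. The vertices are the $2\ell$ endpoints of the segments, the $m$ intersection points, and the four corners of $B$. By general position, every intersection point lies on exactly two segments and is not an endpoint. The edges are the pieces into which the segments are cut by their intersection points, together with the four sides of $B$.

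\textbf{Counting vertices and edges.} We have $V = 2\ell + m + 4$. Each intersection point lies in the interior of two segments, so it adds one cut to each of them. A segment carrying $m_i$ intersection points is therefore split into $m_i+1$ edges. Since $\sum_i m_i = 2m$, the segments contribute $\ell + 2m$ edges, and with the box sides $E = \ell + 2m + 4$. If several intersection points on one segment were to coincide, the count would only drop, but general position rules this out.

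\textbf{Applying Euler's formula.} For a plane graph with $c$ connected components, Euler's formula reads $V - E + F = 1 + c$, where $F$ counts faces including the unbounded one. The cells inside $B$ are exactly the $F-1$ bounded faces. Hence
\[
F - 1 = E - V + c = (\ell + 2m + 4) - (2\ell + m + 4) + c = m - \ell + c .
\]
The component containing $B$ contributes one to $c$. Every other component consists of at least one segment, so $c \le \ell + 1$. Substituting gives $F - 1 \le m - \ell + \ell + 1 = m + 1$, which is already at most $\ell + m + 1$.

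\textbf{Main obstacle.} The only delicate point is handling a disconnected arrangement correctly, which is why we use the multi-component form of Euler's formula rather than the connected one. A possible worry is a floating segment sitting inside a face without splitting it. The component form accounts for this automatically, because such a segment adds one to $c$ while adding no new face.

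This gives the claimed bound $\ell+m+1$, in fact in the stronger form $m+1$. If the paper's intended notion of "cells" counted something more, such as treating each segment side separately, the weaker bound $\ell+m+1$ would still follow with room to spare.
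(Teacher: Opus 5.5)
Your Euler computation, including the multi-component form, is correct for what it counts, but it counts the wrong cells. You bound the faces of $B\setminus\bigcup\cL$ for the raw segments. The paper's proof, and its use in Proposition~\ref{prop:k-blocker}, show that the cells of this lemma are instead those of the arrangement obtained by extending every segment in both directions until it meets another (extended) segment or $\partial B$.

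That difference matters. The extended cells are convex, so each holds at most one blocked point. A face of your arrangement can hold many: for $\ell$ long, disjoint, stacked parallel segments there is a single face ($m+1=1$), yet $\ell+1$ points placed between them are blocked. The extended partition refines yours, so an upper bound on your faces gives no upper bound on its cells. Your closing remark that the bound ``would still follow with room to spare'' is therefore backwards. For the $k\times k$ grid ($\ell=2k$, $m=k^2$) you get $(k-1)^2+1$ faces, while the extended arrangement has exactly $(k+1)^2=\ell+m+1$ cells, so the bound is tight for the intended cells. And if your faces were the cells used downstream, Proposition~\ref{prop:k-blocker} would read $\beta_0(\ell)\le 1$.

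The missing idea is the extension step itself. After extending, there are still $V=m+2\ell+4$ vertices: the $m$ crossings, the $2\ell$ ends of the extended segments, and the four corners. No new crossings arise, because each extension stops at its first obstacle. However, each of the $2\ell$ ends now subdivides an edge of another segment or a side of $B$, so $E=(\ell+2m+4)+2\ell=3\ell+2m+4$. The resulting graph is connected: a vertex of the convex hull of any component must be an extended end that does not lie in the relative interior of another segment of that component, so it lies on $\partial B$. Euler's formula with one component then gives exactly $F-1=\ell+m+1$.
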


\begin{figure}[ht!]
\centering
\begin{tikzpicture}[
    x=0.60cm,
    y=0.60cm,
    segment/.style={line width=0.8pt},
    extension/.style={line width=0.7pt,dashed},
    point/.style={circle,fill=black,inner sep=1.5pt},
    every node/.style={font=\small}
]
    \draw[line width=0.8pt] (0,0) rectangle (10,6);

    \coordinate (t-left-boundary) at (0,1);
    \coordinate (t-left) at (1.8,2);
    \coordinate (t-right) at (5.4,4);
    \coordinate (p) at (7.2,5);
    \coordinate (q) at (9,6);

    \coordinate (s-bottom-boundary) at (6.2,0);
    \coordinate (s-bottom) at (6.4,1);
    \coordinate (s-top) at (6.8,3);

    \coordinate (u-left-boundary) at (0,4.6);
    \coordinate (u-left) at (2,4);
    \coordinate (u-right) at (8,2.2);
    \coordinate (u-right-boundary) at (10,1.6);

    \coordinate (a) at (2.7,2.5);
    \coordinate (r-top) at (3.0,2.0);
    \coordinate (r-bottom) at (3.9,0.5);
    \coordinate (b) at (4.2,0);

    \coordinate (tr-crossing) at (4.2078,3.3377);
    \coordinate (sr-crossing) at (6.7170,2.5849);

    \draw[extension] (t-left-boundary) -- (t-left);
    \draw[extension] (t-right) -- (q);
    \draw[extension] (s-bottom-boundary) -- (s-bottom);
    \draw[extension] (s-top) -- (p);
    \draw[extension] (u-left-boundary) -- (u-left);
    \draw[extension] (u-right) -- (u-right-boundary);
    \draw[extension] (a) -- (r-top);
    \draw[extension] (r-bottom) -- (b);

    \draw[segment] (t-left) -- (t-right);
    \draw[segment] (s-bottom) -- (s-top);
    \draw[segment] (u-left) -- (u-right);
    \draw[segment] (r-top) -- (r-bottom);

    \node[point] at (0,0) {};
    \node[point] at (0,6) {};
    \node[point] at (10,0) {};
    \node[point] at (10,6) {};

    \node[point] at (t-left-boundary) {};
    \node[point] at (t-left) {};
    \node[point] at (t-right) {};
    \node[point] at (s-bottom-boundary) {};
    \node[point] at (s-bottom) {};
    \node[point] at (s-top) {};
    \node[point] at (u-left-boundary) {};
    \node[point] at (u-left) {};
    \node[point] at (u-right) {};
    \node[point] at (u-right-boundary) {};
    \node[point] at (r-top) {};
    \node[point] at (r-bottom) {};
    \node[point] at (tr-crossing) {};
    \node[point] at (sr-crossing) {};

    \node[point] at (p) {};
    \node[point] at (q) {};
    \node[point] at (a) {};
    \node[point] at (b) {};

    \node[above left=2pt] at (4.8,3.67) {$t$};
    \node[right=3pt] at (6.5,1.5) {$s$};
    \node[right=3pt] at (3.45,1.25) {$r$};
    \node[below right=2pt] at (p) {$p$};
    \node[above=3pt] at (q) {$q$};
    \node[above left=2pt] at (a) {$a$};
    \node[below=3pt] at (b) {$b$};
\end{tikzpicture}
\caption{$a, p, q$ are boundary points for segments $r, s, t$ respectively, while $a,p$ are both meet points of segment $t$. The edges of the graph include both the solid and dashed edges.}
\label{fig:boundary-and-meet-points}
\end{figure}

\begin{proof}
    The idea is to construct a planar graph by properly extending the segments and then applying Euler's formula. Let $B$ be  a sufficiently large rectangular bounding box, chosen generically so that no extended segment meets a corner of $B$. Extend each segment in both directions until it \emph{meets} the boundary of $B$ or other (extended) segments, see Figure \ref{fig:boundary-and-meet-points} for an illustration. When an extended segment $s$ meets either the boundary of $B$ or another segment $t$ at a point $p$, we say that $p$ is a boundary point of $s$; in the latter case, $p$ is called a meet point of $t$. This creates a connected planar graph. Next, we count the number of vertices, edges and faces.

The vertices of this graph are: (1) $m$ original intersection points; (2) meet points, which we divide into two types. Let $A$ denote the set of meet points that are generated when a segment meets another segment, and let $C$ denote the set of meet points on the boundary of $B$. We have $|A|+|C| = 2\ell$ because all meet points are also all the boundary points of segments. So, if we let $x = |A|$, then $|C| = 2\ell - x$; (3) the $4$ corner points of $B$.
Thus, the total number of vertices is
\[
    V=m +2\ell+4.
\]

Now, we count edges. For the $i$-th extended segment, let $k_i$ be the number of intersection points and meet points of type $A$ on it, that is, we do not include its two boundary points. These divide the segment into $k_i+1$ edges. Hence, the segments contribute
    $\sum_{i=1}^\ell (k_i+1)=\ell +\sum_{i=1}^\ell k_i$
edges. Since each intersection point lies on exactly two segments, and each meet point belongs to exactly one segment (e.g., $a$ is a meet point of segment $t$, \emph{not} a meet point of segment $r$, in Figure \ref{fig:boundary-and-meet-points}), by double counting we have 
    $\sum_{i=1}^\ell k_i=2m+x$.
Hence, the segments contribute $\ell+2m+x$ edges.

The boundary of $B$ is subdivided by the $2\ell-x$ boundary points, yielding $2\ell-x+4$ additional boundary edges. Thus, the total number of edges is
\[
    E=(\ell+2m+x)+(4+2\ell-x)=3\ell+2m+4.
\]

The graph is connected, so Euler's formula $V-E+F=2$ applies.   The exterior of $B$ is one face, so the number of cells inside $B$ is $F-1$. Therefore, 
    $F-1
    =E-V+1
    =\ell+m+1$.
\end{proof}

\begin{proposition}\label{prop:k-blocker}
    $\beta_k(\ell) \le \ell(1+k/2)+1$, and this bound is asymptotically attained.
\end{proposition}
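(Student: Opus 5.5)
The upper bound will come from \cref{lem:counting_cells} plus a convexity observation; for the matching lower bound I will build a near-tight configuration.

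\textbf{Upper bound.} Suppose $\cL$ is a $k$-crossing family of $\ell$ segments that blocks $\cP$.

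First, bound the number of crossings. Every segment crosses at most $k$ others, so the crossing graph $G(\cL)$ has at most $k\ell/2$ edges. Hence the total number of intersections satisfies $m\le k\ell/2$.

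Next, consider the planar subdivision built in the proof of \cref{lem:counting_cells}, where each segment is extended in both directions until it meets another (extended) segment or the bounding box $B$. I claim every cell of this subdivision is convex.
\begin{itemize}
\item At an original crossing, four angles of size less than $\pi$ meet.
\item At a meet point, or a point on $\partial B$, an extended segment ends on a straight line, forming a T-junction. The two angles there are at most $\pi$.
\item The corners of $B$ have angle $\pi/2$.
\end{itemize}
So no cell has a reflex angle, and every cell is convex.

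Now suppose two points $u,v\in\cP$ lie in the same cell. By convexity, $uv$ lies inside that cell, so it crosses no segment of $\cL$ (not even an extension). Adding $uv$ therefore leaves every degree in $G(\cL)$ unchanged and gives $uv$ degree $0$. So $\cL\cup\{uv\}$ is still $k$-crossing, contradicting the assumption that $\cL$ blocks $\cP$.

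Hence each cell contains at most one point of $\cP$. By \cref{lem:counting_cells},
\[
p\le \ell+m+1\le \ell(1+k/2)+1 .
\]

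\textbf{Lower bound.} The plan is to make the upper-bound argument tight up to $o(\ell)$. We need a family in which:
\begin{enumerate}[(i)]
\item every segment crosses exactly $k$ others, so $m=k\ell/2$ and every segment is saturated;
\item all but $O(1)$ (or $o(\ell)$) of the extensions have negligible length, so cells are essentially bounded by actual segments.
\end{enumerate}

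Given such a family, place one point deep inside each bounded cell. By (ii) almost every cell is nearly closed, so there are about $\ell+m$ of them. Consider a segment $uv$ joining points in two different cells.
\begin{itemize}
\item If $uv$ crosses an actual segment $s$, then $s$ gets degree $k+1$, which is a violation because every segment is saturated.
\item The only escape is for $uv$ to pass through one of the tiny gaps between a segment's endpoint and the segment it nearly touches.
\end{itemize}
There are finitely many points and finitely many gaps. So after fixing the points we can shrink the gaps, keeping general position, so that no line through two points of $\cP$ passes through a gap.

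For the concrete family I would try a long strip of a grid-like pattern. For $k=2j$, use a "ladder/lattice" of segments in which each segment crosses exactly $k$ transversal ones. Close off the ends of the strip with segments whose endpoints nearly abut neighbouring segments. The boundary effects (unsaturated segments at the two ends, and extensions reaching $\partial B$) should cost only $O(1)$ per gadget. Taking the strip long, or gluing many gadgets along abutting endpoints, gives $\beta_k(\ell)\ge \ell(1+k/2)-o(\ell)$. For $k=0$ this should reduce to the known nested-segment construction of Bose et al. blocking about $\ell$ points. For odd $k$ I would pair up segments in the lattice so that degrees come out to exactly $k$.

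\textbf{Expected difficulty.} The main obstacle is designing this construction. It must satisfy, simultaneously, exact $k$-regularity of the crossing graph and near-closedness of essentially all cells, with the Euler count $\ell+m+1$ realised. The convexity argument for the upper bound is routine by comparison.
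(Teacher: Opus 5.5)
Your upper bound is correct and is essentially the paper's argument. The degree bound gives $m\le k\ell/2$; every cell of the extended subdivision from \cref{lem:counting_cells} is convex; hence each cell holds at most one blocked point.

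\textbf{The gap is the lower bound.} You list desiderata (i)--(ii) and a gap-shrinking argument, but never exhibit a family, and ``I would try a long strip'' does not prove that the bound is asymptotically attained. The strip idea also has a built-in problem. A segment running along the strip would cross unboundedly many transversals, so the strip would have to be assembled from bounded pieces anyway.

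\textbf{The missing idea} is that a bounded gadget already meets the bound with equality, and copies of it can be nested.
\begin{itemize}
\item Take $k$ horizontal and $k$ vertical segments, each horizontal crossing every vertical. Let every segment overhang its outermost crossing by some $\delta>0$, then perturb slightly into general position.
\item Every segment has exactly $k$ crossings, so no new segment may cross any of them.
\item Put one point in each of the $(k+1)^2$ grid cells. Points in the outer row and column lie within $\epsilon<\delta$ of the outermost segments, i.e.\ in the notches formed by the overhangs.
\item Suppose two points lie in columns $i<i'$, where column $0$ is left of the first vertical. The segment joining them meets the supporting line of the $(i+1)$-st vertical at a height inside that segment's span, so it crosses it. The same holds for rows.
\end{itemize}
Thus $2k$ segments block $(k+1)^2=2k(1+k/2)+1$ points, so the upper bound is attained exactly at $\ell=2k$.

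Now replace any blocked point by a tiny copy of this gadget. The earlier blocking crossings are transversal and interior, so they survive the small perturbation. Each replacement adds $2k$ segments and $(k+1)^2-1=2k(1+k/2)$ points. This gives $\beta_k(2kt)\ge 2kt(1+k/2)+1$, matching the upper bound exactly. For other $\ell$, monotonicity applies: adding an isolated far-away segment preserves both the $k$-crossing property and blocking.

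This construction needs neither your near-closed-cells condition (ii) nor a parity case split on $k$. The outer cells are open, and it is the overhangs that block. The case $k=0$ is taken from \cite{bose2020non}, as you suggested.
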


\begin{proof}
    Let $\cL$ be a family of segments satisfying $k$-crossing property. By double counting the point-segment incidences, the number of intersections is at most $k\ell/2$. Applying Lemma \ref{lem:counting_cells}, we obtain that the number of cells is at most $\ell + k\ell/2 + 1$. We first justify that every cell of this arrangement is convex, and hence can contain at most one blocked point. Since every extension is stopped as soon as it meets an obstacle, every end of every segment lies either on the boundary of $B$ or in the interior of another segment, where it terminates (at a meet point one segment passes straight through while the other one ends, cf.\ Figure~\ref{fig:boundary-and-meet-points}). Hence no cell boundary makes a reflex turn, and no cell is bounded twice by segments lying on the same supporting line: a cell containing points on both sides of one of its boundary segments would have to go around an end of that segment, but every end is supported, either by $\partial B$ or by the segment on which it terminates. Therefore every cell is an intersection of open half-planes with $B$, and is so convex. In particular, two points lying in the same cell are joined by a segment that remains inside the cell, and thus crosses no segment of $\cL$; so the points of a blocked set occupy pairwise distinct cells, and the upper bound for $\beta_k(\ell)$ follows. 

    For the lower bound, we distinguish two cases. For $k=0$ the assertion is exactly the noncrossing case of~\cite{bose2020non}; our construction below therefore only concerns $k\ge1$. Consider the following construction: the grid formed by $k$ horizontal segments and $k$ vertical segments gives a family of segments satisfying $k$-crossing property. Note that it blocks $(k+1)^2$ points. One can then recurse this construction: in each cell one adds a copy of this $k\times k$ construction. This yields the effect of adding $(k+1)^2 -1 = k^2 + 2k = 2k(1+k/2)$ new blocked points by adding $2k$ segments. Hence, in the limit, the number of blocked points would be $1+k/2$ times the number of segments, as desired.
\end{proof}

\subsection{Implication for Greedy}

Theorem \ref{thm:block-to-greedy} together with Proposition \ref{prop:k-blocker} immediately yields the following.

\begin{theorem}\label{thm:k-greedy}
    For every fixed $k\geq0$, the exact asymptotic competitive ratio of
Greedy for the online $k$-crossing matching is $4/(k+6)$.
\end{theorem}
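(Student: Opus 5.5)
The plan is to combine the two parts of Theorem~\ref{thm:block-to-greedy} with the two halves of Proposition~\ref{prop:k-blocker}. The key observation is that $\beta_k(\ell)/\ell$ tends to $1+k/2$, and substituting $p\approx(1+k/2)\ell$ into $2\ell/(2\ell+p)$ gives
\[
\frac{2\ell}{2\ell+(1+k/2)\ell}=\frac{2}{3+k/2}=\frac{4}{k+6}.
\]
So both directions come down to controlling the additive constants, which is exactly what the asymptotic notion of competitive ratio allows.

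For the lower bound on Greedy's performance, fix an input $\cI$ of $n$ points on which Greedy creates $\ell$ segments. Part (1) of Theorem~\ref{thm:block-to-greedy}, with $p=\ell(1+k/2)+1$ from Proposition~\ref{prop:k-blocker}, gives $n\le 2\ell+\ell(1+k/2)+1=\ell(k+6)/2+1$. It is cleaner to argue directly: the unmatched points are blocked by Greedy's family, so $n-2\ell\le \ell(1+k/2)+1$. Hence $\ell\ge 2(n-1)/(k+6)$, and therefore
\[
\ALG(\cI)=2\ell\ge \frac{4}{k+6}\,n-\frac{4}{k+6}.
\]
Since $\OPT(\cI)=n$, this is the asymptotic ratio $4/(k+6)$ with constant $C=4/(k+6)$.

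For the matching upper bound, I use the construction from Proposition~\ref{prop:k-blocker}. For $k\ge1$, start with the $k\times k$ grid (which blocks $(k+1)^2$ points) and apply $t$ levels of recursion. This gives families with $\ell_t=2k(1+t)$ segments blocking $p_t=(k+1)^2+t(k^2+2k)$ points, after perturbing the configuration into general position without changing the crossing structure. Then $p_t/\ell_t\to 1+k/2$. Part (2) of Theorem~\ref{thm:block-to-greedy} produces inputs of size $n_t=2\ell_t+p_t\to\infty$ on which Greedy matches only $2\ell_t$ points, and $2\ell_t/n_t\to 4/(k+6)$. For any $\rho>4/(k+6)$ and any constant $C$, choosing $t$ large makes $2\ell_t<\rho n_t-C$, so no ratio above $4/(k+6)$ holds. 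For $k=0$, the required instances are the known ones from~\cite{bose2020non}, where Greedy matches about $2/3$ of the points, and $2/3=4/6$.

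The main obstacle is the recursive construction. One has to check that placing a $k\times k$ grid copy inside a cell keeps every segment at no more than $k$ crossings, which holds because the new segments are confined to the cell's interior and cross only each other. One also has to check that the blocked points really remain blocked: a segment between two points in different subcells must cross at least one segment, and adding that crossing would push some segment to degree $k+1$. This is the delicate part. A segment joining points in different cells at the same recursion depth may cross only unsaturated boundaries, so the grid segments must be arranged so that every separating segment already has exactly $k$ crossings, or so that the new segment is forced to make more than $k$ crossings itself. I would handle this first, by making each copy's segments span the full cell and therefore become saturated, and then accept the asymptotic count from Proposition~\ref{prop:k-blocker}.
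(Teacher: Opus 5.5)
Your proposal is correct and follows the paper's route: the paper obtains the theorem in one line by combining Theorem~\ref{thm:block-to-greedy} with the upper bound and the recursive grid construction of Proposition~\ref{prop:k-blocker}. Your explicit handling of the additive constants and of $2\ell_t/n_t\to 4/(k+6)$ simply fills in that step. The obstacle you flag is not a real one: each segment of a $k\times k$ copy crosses exactly the $k$ perpendicular segments of that copy and nothing else, since the copy lies inside its cell. So every segment is saturated automatically, not because of how far it extends in the cell, and the segment joining any two blocked points in different cells crosses one of these saturated segments.
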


Setting $k=0$, this recovers the competitive ratio of Greedy for online noncrossing matching established previously by Bose et al.~\cite{bose2020non}.

\section{Greedy in the triangle-free-crossing model}\label{sec:triangle-free}

\subsection{A construction for $\beta_\triangle(\ell)$} \label{sec:lowerbound_triangle}

Let $\ell,p$ be two parameters.  We say a pair $(\ell,p)$ is \emph{admissible} if there is a triangle-free-crossing family of $\ell$ segments blocking $p$ points, equivalently, $\beta_\triangle(\ell) \ge p$.

\begin{figure}[ht!]
\centering
\includegraphics[width=.4\linewidth,trim=0 24 0 0,clip]{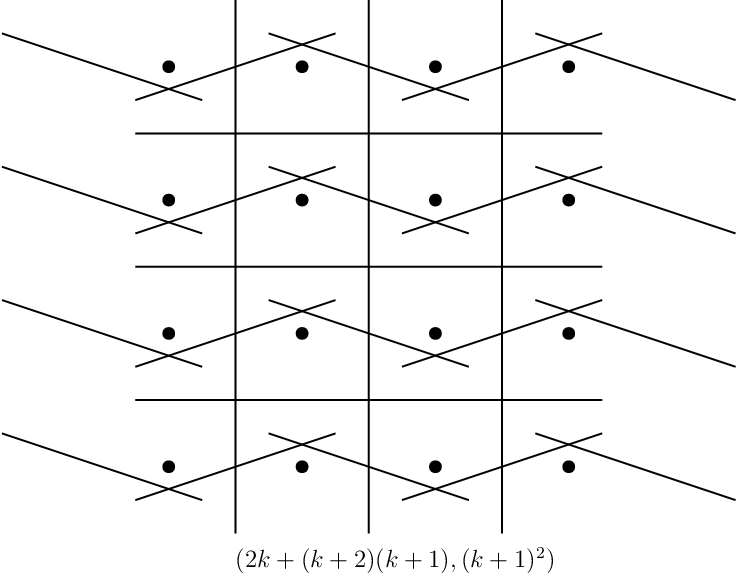}
\caption{$(2h+(h+2)(h+1),(h+1)^2)$ is admissible; the figure shows the case $h=3$.}
\label{fig:triangle-grid}
\end{figure}

\begin{proposition}\label{prop:triangle-blocker}
    $\ell - O(\sqrt{\ell}) \le \beta_\triangle(\ell) \le  \frac{\ell^2 +\ell +2}{2}$.
\end{proposition}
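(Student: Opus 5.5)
The proof has two halves: the lower bound comes from the grid construction of Figure~\ref{fig:triangle-grid}, and the upper bound from the cell-counting Lemma~\ref{lem:counting_cells}.

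\textbf{Lower bound.} I would show that $(2h+(h+2)(h+1),(h+1)^2)$ is admissible for every $h$. The segments are placed as in Figure~\ref{fig:triangle-grid}.
\begin{itemize}
\item Take $h$ long horizontal segments and $h$ long vertical segments forming an $h\times h$ grid with $(h+1)^2$ cells, and put one point in each cell.
\item A horizontal–vertical grid alone has triangle-free crossing graph, since it is bipartite. Its weakness is that it only blocks segments between different cells that cross a line, and a pair $u,v$ may be allowed to add a segment $uv$ crossing many grid lines. In the triangle-free model, however, $uv$ is forbidden exactly when it crosses two segments that cross each other.
\item So I would add, along each horizontal grid line, $h+1$ short tilted segments, one per column gap. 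Each tilted segment crosses that horizontal line and is placed so that any segment $uv$ crossing the horizontal line within that column also crosses it. There are $(h+2)$ such families of $h+1$ short segments, giving $(h+2)(h+1)$ in total.
\item I must check two things. First, no three segments pairwise cross: the short segments touch only one long line and never each other, so every triangle would have to use two long segments plus a short one, which is avoided. Second, every pair of points is blocked, i.e.\ $uv$ creates a triangle with some long–short crossing pair or some horizontal–vertical crossing pair.
\end{itemize}
Then $\ell=h^2+5h+2$ and $p=h^2+2h+1=\ell-3h-1=\ell-O(\sqrt\ell)$. Monotonicity in $\ell$ (add far-away isolated segments) fills in the intermediate values of $\ell$. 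I expect the main obstacle to be verifying the blocking for \emph{all} pairs of points, including adjacent cells and pairs in the same row or column. This needs a careful choice of where the short segments sit, and I would first fix the explicit coordinates from the figure.

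\textbf{Upper bound.} If $\cL$ has $\ell$ segments, the number of pairwise intersections is $m\le\binom{\ell}{2}$. The convexity argument in the proof of Proposition~\ref{prop:k-blocker} uses no property of the family beyond the extension construction, so it applies verbatim. Two points in the same convex cell can be joined without crossing anything, and adding a non-crossing segment never creates a triangle. Hence blocked points lie in distinct cells, and Lemma~\ref{lem:counting_cells} gives
\[
\beta_\triangle(\ell)\le \ell+\binom{\ell}{2}+1=\frac{\ell^2+\ell+2}{2}.
\]
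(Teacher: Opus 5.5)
Your upper bound is correct but takes a different route from the paper. The paper counts the at most $\binom{\ell}{2}+\ell+1$ faces of the arrangement of the $\ell$ supporting lines, which are automatically convex. You instead use Lemma~\ref{lem:counting_cells} together with the convexity argument of Proposition~\ref{prop:k-blocker}. Your route can even be sharpened: $G(\cL)$ is triangle-free, so Mantel's theorem gives $m\le \ell^2/4$.

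The lower bound, however, fails as you describe it. Take two points $u,v$ in horizontally adjacent cells of the same row.
\begin{itemize}
\item The segment $uv$ stays inside that row's strip, so it crosses no horizontal segment.
\item It crosses the vertical segment $V$ separating the two cells, but every segment of your family that crosses $V$ is horizontal.
\item Any short segment it meets crosses only one horizontal line and no other short segment.
\end{itemize}
Hence $uv$ lies in no triangle, and this pair is not blocked. Tilting the short segments further cannot help: a segment crossing both a horizontal and a vertical line already forms a triangle with them, since every horizontal crosses every vertical. Your count is also not derived from the description. There are $h$ horizontal lines with $h+1$ short segments each, which gives $h(h+1)$ segments, not $(h+2)(h+1)$.

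The defect is structural, not a matter of choosing coordinates from the figure. Suppose the added segments pairwise do not cross. For two points in the same row, the only usable crossing pair is then a vertical line together with an added segment crossing it inside that row. So the added segments crossing a vertical line inside row $i$ form a noncrossing family blocking the $h+1$ points of row $i$. By the case $k=0$ of Proposition~\ref{prop:k-blocker}, there are at least $h$ of them. Likewise each column needs at least $h$ added segments crossing a horizontal line, and no segment can serve both purposes. This forces $\ell\ge 2h+2h(h+1)$ against $p=(h+1)^2$, i.e.\ $p\approx \ell/2$, far from $\ell-O(\sqrt{\ell})$.

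The missing idea is that the added segments must cross one another. In the paper's construction each of the $h+1$ rows carries $h+2$ segments in wave form (Figure~\ref{fig:triangle-grid}), about one added segment per cell. Crossing pairs inside the wave can then close triangles with $uv$, so a single added segment takes part in blocking pairs in both directions while the crossing graph stays triangle-free. Your requirement that the short segments never touch each other rules out exactly this, so as written your proposal does not establish the lower bound.
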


\begin{proof}
	Since the $\ell$ supporting lines can determine at most ${\ell \choose 2} + \ell + 1 = \frac{\ell^2 +\ell +2}{2}$ convex regions (see  \cite[Problem 3.2]{ECG_Szilard}), and clearly each convex region can contain at most one blocked point, this shows the upper bound. 

	For the lower bound, we exhibit an explicit construction, see an illustration in Figure \ref{fig:triangle-grid}. The construction is to  give  $h$ horizontal and vertical segments forming a grid, and in each row add $h+2$ segments in a wave form as indicated in the illustration for $h=3$.  One can directly verify that the family of segments is triangle-free, and all the black points are indeed blocked by the configuration. There are $2h + (h+2)(h+1) = h^2 + 5h + 2$ line segments and $(h+1)^2 = h^2 + 2h + 1$ blocked points, i.e., $(2h + (h+2)(h+1), (h+1)^2)$ is admissible for every $h \ge 1$. This implies for $\ell$ of the form $\ell = 2h + (h+2)(h+1)$, one has
        $\beta_\triangle(\ell) \ge \ell - O(\sqrt{\ell})$.
For $\ell$ that is not of the form $2h + (h+2)(h+1)$, one may consider the $h$ for which
    $2h + (h+2)(h+1) \le \ell < 2(h+1) + (h+3)(h+2)$.
For such $\ell$, one also has $(\ell, (h+1)^2)$ is admissible because trivially $\beta_\triangle(\ell)$ is non-decreasing. Hence, the claimed lower bound holds for every $\ell$.
\end{proof}

We remark that the elegant construction in Figure \ref{fig:triangle-grid} comes from dogged work of trial and error based on the  observation of the smaller configuration in Figure \ref{fig:triangle-blocking}.

\subsection{Implication for Greedy}

Theorem \ref{thm:block-to-greedy} together with Proposition \ref{prop:triangle-blocker} immediately yields the following.

\begin{theorem}\label{thm:triangle-free-greedy}
    The asymptotic competitive ratio of Greedy for the online triangle-free-crossing matching is at most $2/3$.
\end{theorem}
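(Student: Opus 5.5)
We obtain this from part (2) of Theorem~\ref{thm:block-to-greedy} combined with the lower bound of Proposition~\ref{prop:triangle-blocker}. We build a sequence of hard inputs with $\ell\to\infty$ and show that the fraction of points Greedy matches tends to at most $2/3$. Since the competitive ratio is asymptotic and carries an additive constant $C$, we must also check that the total number of points $n$ goes to infinity along this sequence, so that $C$ cannot rescue a ratio above $2/3$.

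\textbf{Step 1.} Fix $h\ge1$. By the proof of Proposition~\ref{prop:triangle-blocker}, the pair $(\ell_h,p_h)=(h^2+5h+2,(h+1)^2)$ is admissible, i.e.\ $\beta_\triangle(\ell_h)\ge p_h$.

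\textbf{Step 2.} Theorem~\ref{thm:block-to-greedy}(2) then gives an input $\cI_h$ with $n_h=2\ell_h+p_h$ points. On $\cI_h$, Greedy creates exactly $\ell_h$ segments and matches $2\ell_h$ points. In that construction the endpoints are presented pair by pair, and triangle-freeness of the family guarantees that Greedy really forms the prescribed segments; the $p_h$ blocked points then remain unmatched. So on $\cI_h$ Greedy matches exactly $2\ell_h$ of the $n_h$ points.

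\textbf{Step 3.} Suppose Greedy had asymptotic ratio $\rho>2/3$. Then for some constant $C$ and every $h$,
\[
2\ell_h\ \ge\ \rho\,(2\ell_h+p_h)-C .
\]
Divide by $n_h=2\ell_h+p_h$ and let $h\to\infty$. Since $\ell_h/p_h\to1$, the left side tends to $2/3$, while $C/n_h\to0$ because $n_h\to\infty$. This gives $2/3\ge\rho$, a contradiction, so the ratio is at most $2/3$.

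There is no real obstacle here; everything rests on the earlier construction. The only delicate point is the one handled in Step 2: the tie-breaking of Greedy is arbitrary, but each newly arriving endpoint has exactly one earlier unmatched partner it could be matched to. That partner is its intended mate, since all earlier points are already matched, and the two are legally matchable because the prescribed family is triangle-free. Hence Greedy forms exactly the prescribed segments regardless of how ties are broken, which is what Step 2 needs.
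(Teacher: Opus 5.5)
Your proposal is correct and follows the paper's route. The paper proves this theorem by simply combining part (2) of Theorem~\ref{thm:block-to-greedy} with the lower bound $\beta_\triangle(\ell)\ge \ell-O(\sqrt{\ell})$ of Proposition~\ref{prop:triangle-blocker}, and you spell out that one-liner along the explicit admissible sequence $(h^2+5h+2,(h+1)^2)$. Your extra care about the additive constant (via $n_h\to\infty$) and about arbitrary tie-breaking is the reasoning implicit in the paper's proof of Theorem~\ref{thm:block-to-greedy}(2). One small correction to your wording there: the first endpoint of each pair finds no unmatched predecessor at all, and only the second endpoint finds exactly one, namely its mate.
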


\subsection{Two-layer-Greedy}

In this section we introduce a simple deterministic two-layer-Greedy algorithm that achieves an asymptotic competitive ratio of $4/5 > 2/3$ for online triangle-free-crossing matching. The idea is to maintain a partition
$\cL=\cL_1\mathbin{\dot\cup}\cL_2$ in which each layer is noncrossing. When
an online point $v$ arrives, it scans the earlier unmatched points in arrival order.
If the scan finds a point $u$ such that $uv$ is noncrossing within $\cL_1$, it matches $u$ and $v$ and adds $uv$ to $\cL_1$.
Otherwise, it performs the same process for $\cL_2$.  If neither scan succeeds, it leaves $v$ unmatched.  Since both $\cL_1$ and $\cL_2$ are noncrossing, $G(\cL)$ is bipartite and therefore
triangle-free.  

\begin{theorem}\label{thm:two-layer}
    The asymptotic competitive ratio of the two-layer-Greedy is exactly $4/5$.
\end{theorem}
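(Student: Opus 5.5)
The plan is to prove the two inequalities separately: a blocking argument for the lower bound $4/5$, and an explicit adversarial input for the matching upper bound.

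\emph{Lower bound.} Fix an input. At the end, let $\cL_1,\cL_2$ be the two layers, with $\ell_1=|\cL_1|$, $\ell_2=|\cL_2|$ and $\ell=\ell_1+\ell_2$, and let $\cP$ be the set of unmatched points.
\begin{itemize}
\item[(a)] Take $u,v\in\cP$ with $u$ arriving first. When $v$ arrived, $u$ was unmatched and was scanned, so $uv$ crossed some segment of $\cL_1$ and some segment of $\cL_2$ as they stood then.
\item[(b)] Layers only grow, so these crossings persist to the end. Hence the final noncrossing family $\cL_1$ blocks $\cP$ in the $0$-crossing sense, and so does $\cL_2$.
\item[(c)] By Proposition~\ref{prop:k-blocker} with $k=0$, $p=|\cP|\le \min(\ell_1,\ell_2)+1\le \ell/2+1$.
\item[(d)] The number of input points is $n=2\ell+p\le \tfrac52\ell+1$. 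The matched count is $2\ell\ge \tfrac45 n-\tfrac45$, which gives asymptotic ratio at least $4/5$.
\end{itemize}
This is the argument of Theorem~\ref{thm:block-to-greedy}(1), applied layerwise.

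\emph{Upper bound.} By Proposition~\ref{prop:k-blocker} with $k=0$ (the Bose et al.\ construction), for arbitrarily large $\ell$ there is a noncrossing family $\cL_A$ of $\ell$ segments blocking a set $\cP$ of $\ell+1-o(\ell)$ points.
\begin{itemize}
\item[(a)] Build $\cL_B$ by rotating each segment $s\in\cL_A$ by a tiny angle about its midpoint, giving $s'$. Then $s'$ properly crosses its twin $s$.
\item[(b)] Closed segments of $\cL_A$ are pairwise disjoint, so they are at positive distance. A small enough rotation therefore keeps $\cL_B$ noncrossing.
\item[(c)] ``$uv$ properly crosses some segment'' is an open condition in general position. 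So for small enough rotations $\cL_B$ still blocks $\cP$.
\item[(d)] Perturb slightly if needed so that everything is jointly in general position.
\end{itemize}

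The input presents three groups in order.
\begin{itemize}
\item[(a)] First the endpoints of $\cL_A$, pair by pair. The first point of each pair has no unmatched predecessor, since all earlier points are matched. The second point scans only its partner, and the segment is noncrossing with $\cL_1$, so all of $\cL_A$ goes into $\cL_1$.
\item[(b)] Then the endpoints of $\cL_B$, pair by pair. Again the only unmatched candidate for the second endpoint is its partner. The segment $s'$ crosses its twin $s\in\cL_1$, so the $\cL_1$ scan fails. The $\cL_2$ scan succeeds because $\cL_B$ is noncrossing. Thus $\cL_2=\cL_B$.
\item[(c)] Finally the points of $\cP$. Every pair of them crosses both $\cL_1$ and $\cL_2$, so none is ever matched.
\end{itemize}
The algorithm matches $4\ell$ of the $4\ell+\ell+1-o(\ell)$ points, a fraction tending to $4/5$.

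The step needing most care is this perturbation argument for $\cL_B$. Each new segment must cross its twin, so that it is forced into $\cL_2$ rather than enlarging $\cL_1$. At the same time noncrossing within $\cL_B$ and the blocking of $\cP$ must be preserved, and all three properties must be checked to be open conditions.
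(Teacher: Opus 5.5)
Your proposal is correct and takes essentially the same route as the paper. For the lower bound, you apply Proposition~\ref{prop:k-blocker} with $k=0$ to each noncrossing layer separately, giving $p\le\min(\ell_1,\ell_2)+1$; for the tight example, you rotate each segment of an extremal noncrossing blocker slightly about its midpoint to form the second layer. Your write-up is actually a little more complete than the paper's: you spell out the arrival order, and you observe that each rotated segment crossing its twin is exactly what forces it into $\cL_2$. As a result you only need layerwise blocking of $\cP$, not the paper's triangle argument.
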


\begin{proof}
    Let $\cP$ be its unmatched point set, put
$p=|\cP|$, and write $\ell_i=|\cL_i|$ and
$\ell=\ell_1+\ell_2$. By the property of the two-layer-Greedy,  each layer blocks $\cP$ with respect to the noncrossing property.  Applying Proposition \ref{prop:k-blocker} with $k=0$ separately to the two layers gives
    $p \le \ell_i + 1$
for $i=1,2$.
    Since $n=2\ell + p = 2(\ell_1 + \ell_2) +p$, a direct simple calculation gives the lower bound.

    To see the tightness, consider the following construction. Let $\cL_1$ be a noncrossing family of $\ell$ segments blocking a set $\cP$ of $\ell+1$ points. By Proposition \ref{prop:k-blocker}, such a configuration exists. Now, rotate each segment of $\cL_1$ slightly about its own midpoint, with the rotations sufficiently small to preserve all required and forbidden crossings, to obtain another configuration $\cL_2$. Clearly, $\cL_2$ is also noncrossing, and by the construction, a segment $s \in \cL_1$ necessarily intersects its counterpart $s' \in \cL_2$, and does not intersect any other segment in both $\cL_1$ and $\cL_2$. In other words, $\cL = \cL_1 \cup \cL_2$ is triangle-free-crossing. Clearly, $\cL_2$ also blocks $\cP$ (under the noncrossing property). Now, we claim that $\cL$ as a triangle-free-crossing family blocks $\cP$ under the triangle-free-crossing property. Indeed, any two points $p_1,p_2 \in \cP$ necessarily intersect a segment $s \in \cL_1$, and hence will also intersect the segment $s' \in \cL_2$ because $s'$ is only a small perturbation of $s$, and we know segment $s$ and $s'$ are intersecting, hence, $p_1p_2$ and $s$ and $s'$ form a triangle. Since $|\cP| = \ell+1$ and $|\cL| = 2\ell$, this proves the asymptotic upper bound.
\end{proof}

\subsection{Vertex cover number and explicit constructions for small Ramsey numbers}	\label{sec:constructions}

We proved a lower bound for $\beta_\triangle$ by an explicit construction in Section \ref{sec:lowerbound_triangle}. Below we provide another possible route that is via the vertex cover number of a graph; this also leads to an interesting possible connection to Ramsey theory. These connections might become more essential when one considers a generalization of the triangle-free-crossing matching, such as online $K_t$-free-crossing matching for $t \ge 4$. In such a case, an explicit construction like in Figure \ref{fig:triangle-grid} might become much harder. 

Let $\cL$ denote a triangle-free-crossing family of $\ell$ line segments. 
Let $\tau(\cL) = \tau(G(\cL))$ denote the size of a minimum vertex cover of the associated graph $G(\cL)$. Let
\begin{equation}    \label{eq:def_tau-m}
    \tau(\ell) := \max \{\tau(\cL): |\cL| = \ell,\ \cL \text{ is triangle-free-crossing}\}.
\end{equation}

\begin{conjecture}  \label{conj:vtx_conj}
    For every $\ell\in \N$, $\beta_\triangle(\ell) \ge \tau(\ell)$.
\end{conjecture}

A stronger conjecture might be true: for every triangle-free-crossing family $\cL$ of line segments, $\beta_\triangle(\cL) \ge \tau(\cL)$. Our lower bound in Proposition \ref{prop:triangle-blocker}  implies Conjecture \ref{conj:vtx_conj} holds asymptotically. 

\begin{theorem}	\label{thm:conj_true_for_large_m}
	Conjecture \ref{conj:vtx_conj} holds for sufficiently large $\ell$.
\end{theorem}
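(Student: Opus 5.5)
We compare an upper bound on $\tau(\ell)$ with the lower bound on $\beta_\triangle(\ell)$ from Proposition~\ref{prop:triangle-blocker}. That lower bound is $\ell - O(\sqrt{\ell})$. So it suffices to show $\tau(\ell) \le \ell - c\sqrt{\ell}$ for some constant $c$ larger than the implied constant in the $O(\sqrt{\ell})$ term. A weaker bound $\tau(\ell)\le \ell - \omega(\sqrt{\ell})$ would also do. For every $\ell$-segment triangle-free-crossing family $\cL$ we have $\tau(\cL) = \ell - \alpha(G(\cL))$. The task is therefore to show that the crossing graph always has an independent set of size noticeably larger than $\sqrt{\ell}$, with an explicit constant.

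The key step is a Ramsey-type argument. $G(\cL)$ is triangle-free, so the neighborhood of any vertex is independent, and hence $\alpha(G(\cL)) \ge \Delta(G(\cL))$. If the maximum degree is small, a greedy independent set gives $\alpha \ge \ell/(\Delta+1)$. Together these give $\alpha \ge \max(\Delta, \ell/(\Delta+1)) \ge \sqrt{\ell}-1$. This is the $R(3,t)$-type bound. To gain more than the additive $O(\sqrt{\ell})$ slack, I would instead use the stronger Ajtai--Koml\'os--Szemer\'edi/Shearer bound for triangle-free graphs, $\alpha \ge c\sqrt{\ell\log \ell}$. This gives
\[
  \tau(\ell) \le \ell - c\sqrt{\ell \log \ell} .
\]
I would also check the explicit constant in the lower bound. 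From the construction, $\ell = h^2+5h+2$ and $p=(h+1)^2$, so $\ell - p = 3h+1 \approx 3\sqrt{\ell}$. Monotonicity between consecutive admissible values of $\ell$ loses at most about $2h+6$ more, so $\beta_\triangle(\ell) \ge \ell - 5\sqrt{\ell} - O(1)$.

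The comparison is then immediate: for sufficiently large $\ell$, $c\sqrt{\ell\log\ell} > 5\sqrt{\ell}+O(1)$, so $\beta_\triangle(\ell) \ge \tau(\ell)$. The main obstacle is this constant bookkeeping. The elementary bound $\alpha\ge\sqrt{\ell}-1$ alone is not enough against a $5\sqrt{\ell}$ deficit, so the $\sqrt{\log \ell}$ gain from Shearer's theorem is essential.

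A purely geometric alternative would avoid the heavy theorem. Segments in general position have a noncrossing subfamily of linear size: partition by slope, or use that intersection graphs of segments are $\chi$-bounded in the triangle-free case (known to have bounded chromatic number in terms of $\log\log$). I would fall back on Shearer's bound, which only uses triangle-freeness and suffices.
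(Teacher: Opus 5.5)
Your proof is correct and is the paper's argument: write $\tau(\cL)=\ell-\alpha(G(\cL))$, bound $\alpha$ from below by the Ajtai--Koml\'os--Szemer\'edi/Shearer estimate $\Omega(\sqrt{\ell\log\ell})$ for triangle-free graphs, and compare with $\beta_\triangle(\ell)\ge \ell-O(\sqrt{\ell})$ from Proposition~\ref{prop:triangle-blocker}; your explicit deficit $5h+6\le 5\sqrt{\ell}+6$ just makes the paper's asymptotic comparison concrete. Drop the closing ``geometric alternative'', because its claims are false: a noncrossing subfamily of linear size need not exist (Walczak's triangle-free families, cited in the paper, have $\alpha(G(\cL))=\Theta(\ell/\log\log \ell)$), and triangle-free segment intersection graphs are not $\chi$-bounded; your actual proof does not use that remark.
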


\begin{proof}
	Since every triangle-free graph on $\ell$ vertices has an independent set of size $\ge \Omega(\sqrt{\ell\log \ell})$ (see, e.g., \cite{kim1995ramsey}), by Proposition \ref{prop:triangle-blocker},  for sufficiently large $\ell$ we have, 
	$\tau(\ell) \le \ell -  \Omega(\sqrt{\ell\log \ell}) \le  \ell - O(\sqrt{\ell})  \le \beta_\triangle(\ell)$.
\end{proof}

One reason to consider the graph parameter is that results from there can be applied to help study the Greedy algorithm. Indeed, Walczak \cite{walczak2015triangle} constructed a triangle-free family $\cL$ of $\ell$ line segments for which $\alpha(G(\cL)) = \Theta(\ell/\log\log \ell)$, where $\alpha(G)$ denotes the size of a maximum independent set. Hence, 
        $\tau(G(\cL)) = \ell - \alpha(G(\cL)) = \ell - \Theta(\ell/\log\log \ell) 
        = (1- \Theta(1/\log\log \ell)) \cdot \ell$. 
If Conjecture \ref{conj:vtx_conj} were true, then, 
	$\beta_\triangle(\ell) \ge \tau(\ell) \ge  (1- \Theta(1/\log\log \ell)) \cdot \ell$,
which is also sufficient to imply Theorem \ref{thm:triangle-free-greedy}.

Having established Conjecture \ref{conj:vtx_conj} for large $\ell$, it remains to check it for small $\ell$. Below, we give explicit constructions for cases corresponding to some Ramsey numbers, all supporting Conjecture \ref{conj:vtx_conj}. 

The Ramsey number $R(r,t)$ is the minimum integer $N$ such that every red--blue coloring of the edges of $K_N$ contains either a red clique $K_r$ or a blue clique $K_t$. Alternatively, if a graph on $R(r,t)$ vertices is $K_r$-free, then it must contain an independent set of size at least $t$. We will mainly be interested in $R(3,t)$. By the Ramsey property, for $\ell = R(3,t)$, we have $\tau(\ell) \le \ell-t$. This simple bound for $\tau$ is the reason we choose $\ell = R(3,t)$ as our target family size.

Table \ref{tab:Ramsey} is the current knowledge of $R(3,t)$ for $t \le 10$, see \cite{radziszowski2012small}. The case for $t=10$ is still undetermined, $40\le R(3,10) \le 41$, see \cite{angeltveit2024R_3}. We will establish the lower bound for $\beta_\triangle(\ell)$ in Table \ref{tab:Ramsey}, thus establishing the following.

\begin{table}[ht!]
\centering
\caption{$R(3,t)$ for $t= 3, \ldots, 10$.} \label{tab:Ramsey}
\begin{tabular}{|c|c|c|c|c|c|c|c|c|c|}
\hline
$t$     & 3 & 4 & 5 & 6 & 7 & 8 & 9 & \multicolumn{2}{c|}{10} \\
\hline
$\ell = R(3,t)$ & 6 & 9 & 14 & 18 & 23	& 28 & 36 &	40? & 41? \\
\hline
$\tau(\ell)\le \ell-t$ & 3 & 5 & 9 & 12 & 16	& 20 & 27 &	30 or 31 & 31\\
\hline
$\beta_\triangle(\ell)\ge $ & 4 & 7 & 11 & 14 & 18	& 22 & 28 &	31 & 32 \\
\hline
\end{tabular}
\end{table}

\begin{theorem}
	Conjecture \ref{conj:vtx_conj} holds for $\ell=R(3,t)$ for $t=3,\ldots, 10$.
\end{theorem}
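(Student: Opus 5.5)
The argument has two parts. First we bound $\tau(\ell)$ from above using Ramsey numbers. Then we bound $\beta_\triangle(\ell)$ from below using explicit admissible pairs, and check that the lower bound always wins.

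\textbf{Upper bound on $\tau$.} Take $\ell=R(3,t)$ and any triangle-free-crossing family $\cL$ with $|\cL|=\ell$. The crossing graph $G(\cL)$ is triangle-free on $R(3,t)$ vertices, so it has an independent set of size $t$. The complement of that set is a vertex cover, hence $\tau(\cL)\le \ell-t$, and so $\tau(\ell)\le \ell-t$. This is the third row of Table~\ref{tab:Ramsey}. For $t=10$ we do not need the exact value of $R(3,10)$: it suffices to treat both candidates $\ell=40$ and $\ell=41$. Each is at least $R(3,10)$ or equal to it, and since the Ramsey property is monotone in $\ell$, in either case $\tau(\ell)\le \ell-10$. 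This gives $\tau(40)\le 30$ and $\tau(41)\le 31$, which is even slightly better than the table's ``30 or 31''.

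\textbf{Lower bound on $\beta_\triangle$.} It remains to prove $\beta_\triangle(\ell)\ge \ell-t+1$ for each listed $\ell$; the last row of Table~\ref{tab:Ramsey} records even slightly more. Since $\beta_\triangle$ is non-decreasing, it suffices to exhibit an admissible pair $(\ell',p)$ with $\ell'\le \ell$ and $p$ at least the table value. The sources are as follows.
\begin{itemize}
\item Figure~\ref{fig:triangle-blocking} gives $(5,4)$, which settles $\ell=6$.
\item The grid construction of Proposition~\ref{prop:triangle-blocker} gives $(h^2+5h+2,(h+1)^2)$, so $(8,4)$, $(16,9)$, $(26,16)$, $(38,25)$. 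These alone are too weak for most entries: for example $\ell=9$ needs $7$ points, not $4$.
\item The main additional tool is the layered construction from the proof of Theorem~\ref{thm:two-layer}. Take a noncrossing family of $m$ segments blocking $m+1$ points, and double every segment by a small rotation. The result is a triangle-free-crossing family with admissible pair $(2m,m+1)$. This is still too weak for most entries.
\end{itemize}

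\textbf{Hybrid constructions.} So I would build tailored configurations that mix the two ideas. Start from a base arrangement, say the $h\times h$ grid or the $5$-segment gadget of Figure~\ref{fig:triangle-blocking}. Then insert, inside individual cells, local gadgets which each add a fixed number of blocked points per added segment. A gadget counts as valid if every pair of points in distinct cells is separated either by a noncrossing-with-neighbours segment or by a crossing pair forming a triangle. The key reusable lemma is a \emph{local gain} statement: inserting the Figure~\ref{fig:triangle-blocking} gadget into a cell, with its segments crossing only segments that are not mutually crossing, keeps the family triangle-free. It adds $5$ segments and $3$ new blocked points, or more if the gadget uses the cell walls. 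Combining the gain ratios available from the wave rows of Figure~\ref{fig:triangle-grid}, which give roughly one point per segment, the plan is to hit each target: $(9,7)$, $(14,11)$, $(18,14)$, $(23,18)$, $(28,22)$, $(36,28)$, $(40,31)$, $(41,32)$. For each target I would then verify three things: triangle-freeness, by listing the crossing graph and checking it is triangle-free (ideally bipartite plus a few edges); blocking, by checking every pair of points; and general position, by a small perturbation.

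\textbf{Main obstacle.} The hardest part is the explicit small configurations, especially $(9,7)$ and $(14,11)$, where the ratio $p/\ell$ is close to $1$ already at tiny size. There the asymptotic grid-plus-wave construction is far from sufficient, and the configuration must be found by hand or computer search. I would first try truncating the wave construction of Figure~\ref{fig:triangle-grid}, removing boundary segments whose removal costs no blocked point. If that fails, I would run a computer search over perturbed grid-like arrangements and check blocking exhaustively over pairs. For the larger $\ell$, I expect a slack-recursive argument to suffice: start from a smaller verified configuration and add wave rows or gadgets.
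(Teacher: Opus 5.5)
You handle the Ramsey half as the paper does, but the half that carries the theorem is missing. For $t\ge5$ you never exhibit configurations at $\ell=14,18,23,28,36,40,41$. You only describe a search strategy and concede the configurations must be found by hand or computer. The strategy also fails as stated, for two reasons.

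\emph{Blocking criterion.} Since $\cL$ is triangle-free, $\cL\cup\{uv\}$ contains a triangle iff $uv$ crosses two segments of $\cL$ that cross each other. So your first validity alternative, separation by a segment with no crossings, blocks nothing. That is the noncrossing criterion, and it does not transfer to this model.

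\emph{Gain rates.} Inserting the gadget of Figure~\ref{fig:triangle-blocking} into a cell trades away the cell's point, giving $+3$ points per $+5$ segments. A wave-row step gives $2h+3$ points per $2h+6$ segments. So every step raises the deficit $\ell-p$ by at least $2$, while the targets allow deficit at most $t$. From $(5,4)$ you reach $(10,7)$ and $(35,22)$, whereas $\ell=14$ needs $9$ points and $\ell=36$ needs $27$. The ``or more if the gadget uses the cell walls'' is unsubstantiated.

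The idea you are missing is the paper's additive merging of gadgets. The admissible gadgets $(3,2)$, $(4,3)$, $(5,4)$ and $(8,6)=(3,2)+(5,4)$ can be glued so that every pair of points from different gadgets stays blocked. The counts then add exactly, with no lost point. This gives $(13,10)=(5,4)+(8,6)$, which chains to $k(13,10)$. Combined with single added segments contributing $(1,1)$ and deletion of a $(3,2)$ sub-gadget, the targets follow:
\begin{itemize}
\item $(9,7)$: merge $(4,3)$ with $(5,4)$, undoing one crossing;
\item $(14,11)=(13,10)+(1,1)$;
\item $(18,14)=(5,4)+(8,6)+(5,4)$;
\item $(23,18)=(9,7)+(14,11)$;
\item $(28,22)=2(13,10)+(2,2)$;
\item $(36,28)=3(13,10)-(3,2)$;
\item $(41,32)=3(13,10)+(2,2)$, and hence $(40,31)$.
\end{itemize}

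Two smaller points. First, you are over-targeting. Only $\beta_\triangle(\ell)\ge\ell-t$ is needed, so $t=3,4$ are already settled by $(5,4)$ and by the doubled noncrossing family $(8,5)$, since $\tau(9)\le5$. Second, at $t=10$ your claim $\tau(40)\le30$ presumes $40\ge R(3,10)$, which is open. Unconditionally, $R(3,9)=36$ gives only $\tau(40)\le31$; the pair $(40,31)$ still meets this.
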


The constructions will use some simple gadgets in Figure \ref{fig:gadgets}, i.e., the admissible small gadgets $(3,2)$, $(4,3)$, $(5,4)$, and $(8,6)$. One can see that $(8,6)$ is the merge of $(3,2)$ and $(5,4)$.

\begin{figure}[ht!]        
    \centering
    \includegraphics[scale=.5]{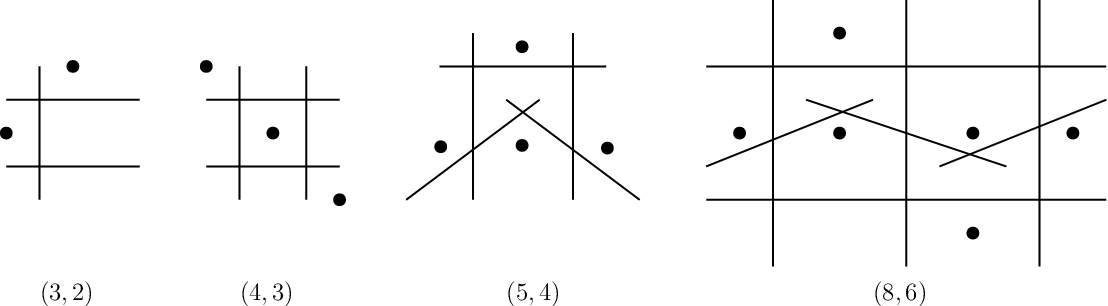}
    \caption{Admissible small gadgets used in constructions}    \label{fig:gadgets}
\end{figure}

\paragraph{The case $t=3$}

In this case, $\ell=R(3,3) = 6$. Since $(5,4)$ is admissible as shown in Figure \ref{fig:gadgets}, $(6,4)$ is also admissible.

\paragraph{The case $t=4$}

In this case, $\ell=R(3,4) = 9$. Figure \ref{fig:gadgets} shows $(8,6)$ is admissible, hence $(9,6)$ is also admissible. We give a better construction in Figure \ref{fig:t4_9} showing that $(9,7)$ is admissible. One can directly check that the construction is triangle-free.  Note that the construction can be viewed as merging $(4,3)$ and $(5,4)$, while removing one cross of blue segments in $(4,3)$ to be triangle-free.

\begin{figure}[ht!]        
    \centering
    \includegraphics[scale=.5]{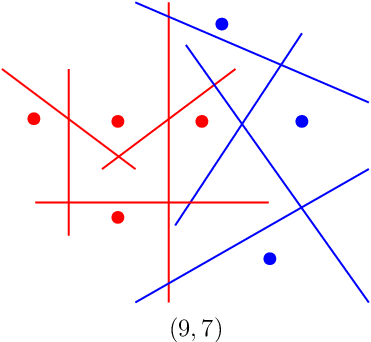}
    \caption{9 segments block 7 points.}    \label{fig:t4_9}
\end{figure}

\paragraph{The case $t=5$}

In this case, $\ell=R(3,5) = 14$. In Figure \ref{fig:t5_14}, the first red and blue block gives a construction that merges $(5,4)$ and $(8,6)$, which results in an admissible pair $(5,4) + (8,6) = (13, 10)$, hence $(14,10)$ is also admissible. One can directly verify that this construction in Figure \ref{fig:t5_14} is triangle-free  and all the points are indeed blocked. In fact, $(14,11)$ is also admissible as shown in Figure \ref{fig:t5and7}, which is obtained by adding the green segment and the green blocked point in the $(13,10)$ construction. 

Furthermore, Figure \ref{fig:t5_14} shows that the admissible pair $(13, 10)$ can copy itself, i.e., $k(13,10)$ is admissible for all $k\ge 1$.

\begin{figure}[ht!]        
    \centering
    \includegraphics[scale=.5]{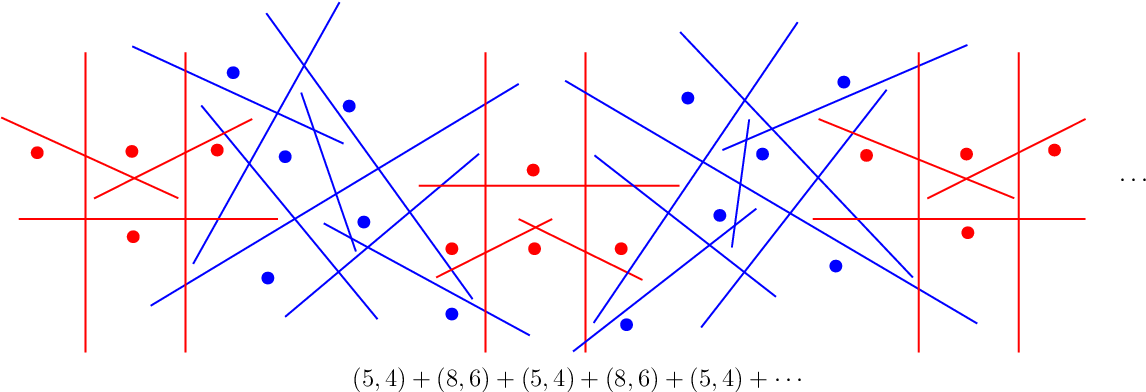}
    \caption{Admissible pairs $k(13,10)$.}    \label{fig:t5_14}
\end{figure}

\begin{figure}[ht!]        
    \centering
    \includegraphics[scale=.5]{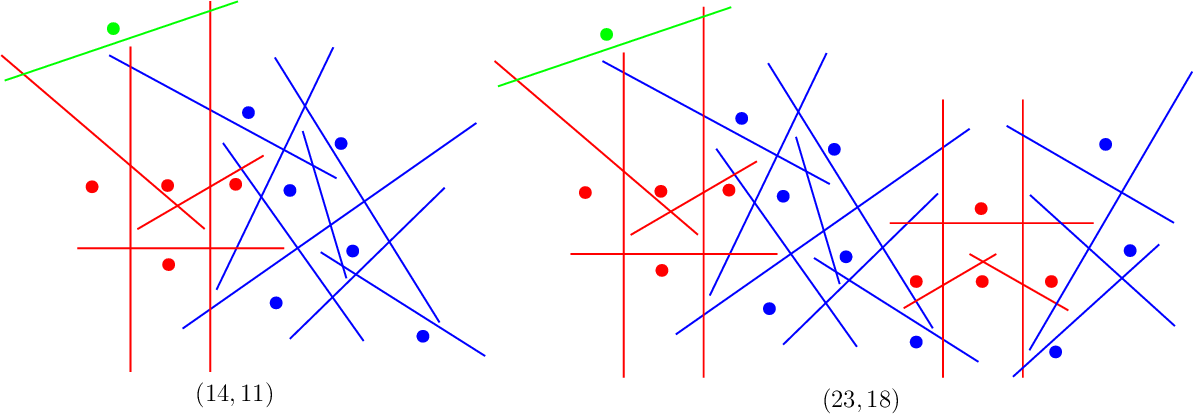}
    \caption{Left: 14 segments block 11 points. Right: 23 segments block 18 points.}    \label{fig:t5and7}
\end{figure}

\paragraph{The case $t=6$}

In this case, $\ell=R(3,6) = 18$. Note that in Figure \ref{fig:t5_14}, one can choose the part $(5,4) + (8,6) + (5,4) = (18,14)$, as desired.

\paragraph{The case $t=7$}

In this case, $\ell=R(3,7) = 23$. On the right of Figure \ref{fig:t5and7},  we merge the construction in Figure \ref{fig:t4_9} and the left of Figure \ref{fig:t5and7}. One can check this construction is triangle-free and all the points are indeed blocked. Hence,
    $(9,7) + (14,11) = (23,18)$ is admissible.

\paragraph{The case $t=8$}

In this case, $\ell=R(3,8) = 28$. Firstly, we have that 
$2\cdot (13,10) = (26,20)$ is admissible. Based on this construction, one can add two segments, adding two blocked points, as shown in Figure \ref{fig:t8_28}. Hence, $(28,22)$ is admissible. Note that one green segment and point are added to the red block $(5,4)$, while the other green segment and point are added to the blue block $(8,6)$.

\begin{figure}[ht!]        
    \centering
    \includegraphics[scale=.5]{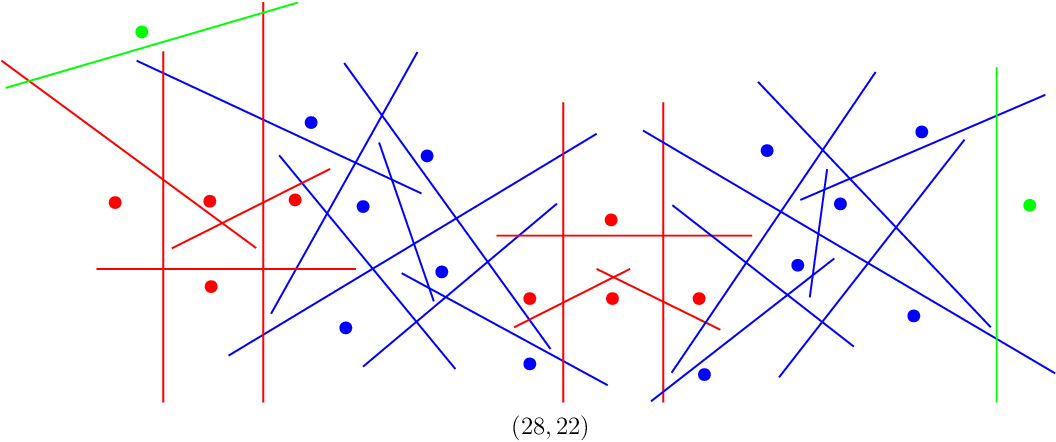}
    \caption{28 segments block 22 points.}    \label{fig:t8_28}
\end{figure}

\paragraph{The case $t=9$}

In this case, $\ell=R(3,9) = 36$. Firstly, we have that
    $3\cdot (13,10) = (39,30)$  is admissible. 
We can remove a $(3,2)$ from it and get that
    $(39,30) - (3,2) = (36,28)$ is admissible, as follows.
Consider the construction in Figure \ref{fig:t5_14}. In the first red block on the left, remove the horizontal segment, the left vertical segment, and the left slanted segment, and remove the leftmost red point and the bottom red point. Observe that the remaining configuration is still a correct construction.

\paragraph{The case $t=10$}

In this case, $40 \le R(3,10)\le 41$. 

Consider $\ell=41$. Since $R(3,10)\le 41$, we know $\tau(41) \le 41-10=31$. Firstly, we have that $3\cdot (13,10) = (39,30)$  is admissible. Similarly, as in Figure \ref{fig:t8_28}, we can add two green segments and two green blocked points, which gives that $(39+2,30+2) = (41,32)$ is admissible.

Consider now $\ell=40$. If $R(3,10) = 40$, then $\tau(40) \le 40-10 = 30$. If $R(3,10) = 41$, then there exists some triangle-free graph on $40$ vertices that does not contain an independent set of size $10$ (but does contain an independent set of size $9$ since $R(3,9) = 36 < 40$), hence, $\tau(40) \le 40-9 = 31$. In both cases, we have $\tau(40) \le 31$. The construction for admissible $(41,32)$ implies that $(40,31)$ is also admissible.

\section{Randomized algorithms in the noncrossing model}\label{sec:noncrossing}

We now turn to the noncrossing model, that is, the case $k=0$ of the k-crossing model, and revisit upper and lower bounds of randomized algorithms.

\subsection{An asymmetric randomized algorithm}     \label{sec:rand_alg}

Without loss of generality we assume all online points lie in a big rectangular region.
Our algorithm maintains a partition into convex regions. A region is partitioned into smaller sub-regions when a line segment appears (as a result of matching two online points inside this region): extend the segment in both directions until it hits the boundary of the region. Each region has type $A$ or $B$, and the root region (the big rectangle) has type $A$. For type-$T$, associate a probability distribution $(x_T,y_T,y_T,z_T)$, whose meaning will be explained shortly, and whose values will be determined later.

An inherited unmatched point, if present, is the first point of the new
region. Let $v_1,v_2,v_3,v_4$ be the first four points assigned to a type-$T$ region.
The local rule is as follows. For notation simplicity, we say match $v_iv_j$ to mean match the point $v_i$ to $v_j$.

\begin{enumerate}
    \item At the arrival of $v_2$, match $v_1v_2$ with probability $x_T$.
    
    \item If $v_1v_2$ is not matched, then at the arrival of $v_3$ match
          $v_1v_3$ or $v_2v_3$, each with probability $y_T/(1-x_T)$. 
    
    \item Suppose one segment among $v_1,v_2,v_3$ has been matched and $u$ is the
          remaining point. At the arrival of $v_4$, match $uv_4$ deterministically when
          it does not cross the existing segment.
    
    \item If no segment has been matched before $v_4$, match $v_4$ deterministically to a point $u \in \{v_1,v_2,v_3\}$ so that it separates the two unmatched points on different sides.
\end{enumerate}

Now, the intuitive meaning of the distribution $(x_T,y_T,y_T,z_T)$ should be clear: $x_T,y_T,y_T$  are respectively the probabilities that segments $v_1v_2, v_1v_3, v_2v_3$ are matched, and $z_T$ is the probability that none of them is matched. 

There are two outcomes of the four points, see an illustration in Figure \ref{fig:nc-partitions}. A \emph{single-pair outcome} leaves two points on
opposite sides of the supporting line of the matched segment; the two sub-regions are assigned type $B$. A \emph{double-pair outcome} matches all four points, resulting in three sub-regions that are assigned type $A$. The algorithm then repeats the above rules in each sub-region, using independent random bits. Clearly, the algorithm correctly outputs a noncrossing matching.

\begin{figure}[ht!]
\centering
\begin{minipage}[t]{.48\linewidth}
\centering
\begin{tikzpicture}[scale=.62,every node/.style={font=\scriptsize}]
  \draw[thick] (0,0) rectangle (5,3);
  \draw[dashed,thick] (0,1.05)--(5,1.95);
  \draw[very thick] (1.6,1.34)--(3.25,1.64);
  \fill (1.6,1.34) circle (2pt) node[below left] {$v_1$};
  \fill (3.25,1.64) circle (2pt) node[below right] {$v_4$};
  \fill (1.15,2.35) circle (2pt)
      node[above left] {$v_2$};
  \fill (3.95,.65) circle (2pt)
      node[below right] {$v_3$};
  \node at (3.25,2.55) {type $B$};
  \node at (2.2,.45) {type $B$};
  \node at (2.5,-.35) {(a) Single-pair};
\end{tikzpicture}
\end{minipage}\hfill
\begin{minipage}[t]{.48\linewidth}
\centering
\begin{tikzpicture}[scale=.62,every node/.style={font=\scriptsize}]
  \draw[thick] (0,0) rectangle (5,3);
  \draw[dashed,thick] (0,1.05)--(3.55,3);
  \draw[dashed,thick] (1.1,0)--(5,1.7);
  \draw[very thick] (1,1.6)--(2.3,2.32);
  \draw[very thick] (2.65,.68)--(3.95,1.25);
  \fill (1,1.6) circle (2pt) node[above left] {$v_1$};
  \fill (2.3,2.32) circle (2pt) node[above] {$v_3$};
  \fill (2.65,.68) circle (2pt) node[below] {$v_2$};
  \fill (3.95,1.25) circle (2pt) node[below right] {$v_4$};
  \node at (.75,2.55) {type $A$};
  \node at (2.15,1.15) {type $A$};
  \node at (4,.45) {type $A$};
  \node at (2.5,-.35) {(b) Double-pair};
\end{tikzpicture}
\end{minipage}
\caption{The recursive partition rule.  Dashed segments indicate the extensions used as new boundaries.}
\label{fig:nc-partitions}
\end{figure}

Our algorithm can be viewed as an asymmetric version of Kamali, Nikbakht, and Sajadpour~\cite{Kamali2022randomizedNM}: in their algorithm the sub-region has no type and $y$ and $z$ are equal; specifically, their algorithm uses uniform probability $x=1/2$ and $y=z=1/6$. The new aspects of our algorithm are that we let the probability depend on the types of sub-regions depending on whether a single-pair or a double-pair has been matched, and secondly, we allow $y$ and $z$ to be unequal. The analysis of the performance of the algorithm follows similar steps as in ~\cite{Kamali2022randomizedNM}.  For completeness, we give the details of the algorithm and its analysis.

\subsubsection{A structure lemma}

 For the analysis, we say a single-pair outcome is
\emph{good} if both children (i.e., the sub-regions) contain at least two points, including their inherited points. A double-pair outcome is \emph{good} if at least one of its three
children is empty. Otherwise, we say they are \emph{bad}.
For a type-$T$ region, we call the event that $v_1v_2$ is matched the
$x_T$-branch, and we call the events that $v_1v_3$ and $v_2v_3$ are matched
the two $y_T$-branches. The remaining event, in which no segment is matched
before $v_4$, is the $z_T$-branch. These four branches have probabilities
$x_T,y_T,y_T,z_T$, respectively.

\begin{figure}[ht!]
\centering
\begin{minipage}[t]{.47\linewidth}
\centering
\begin{tikzpicture}[scale=.78,every node/.style={font=\scriptsize}]
  \coordinate (v1) at (-1.5,-.9);
  \coordinate (v2) at (1.5,-.9);
  \coordinate (v3) at (0,1.35);
  \coordinate (v4) at (.08,-.05);

  \draw (v1)--(v2)--(v3)--cycle;
  \draw[very thick] (v1)--(v2);
  \draw[very thick] (v4)--(v3);
  \draw[dashed] ($(v4)!-.65!(v3)$)--($(v4)!1.7!(v3)$);

  \fill (v1) circle (2pt) node[below left] {$v_1$};
  \fill (v2) circle (2pt) node[below right] {$v_2$};
  \fill (v3) circle (2pt) node[left] {$v_3$};
  \fill (v4) circle (2pt) node[right] {$v_4$};

  \fill (-.55,.25) circle (1.8pt);
  \fill (.55,.28) circle (1.8pt);

  \node[align=center] at (0,-1.45) {(a) $v_4$ is interior};
\end{tikzpicture}
\end{minipage}\hfill
\begin{minipage}[t]{.47\linewidth}
\centering
\begin{tikzpicture}[scale=.78,every node/.style={font=\scriptsize}]
  \coordinate (v1) at (-1.5,-.9);
  \coordinate (v2) at (1.5,-.9);
  \coordinate (v4) at (0,1.35);
  \coordinate (v3) at (.1,-.05);

  \draw (v1)--(v2)--(v4)--cycle;
  \draw[very thick] (v1)--(v2);
  \draw[very thick] (v3)--(v4);
  \draw[dashed] ($(v3)!-.65!(v4)$)--($(v3)!1.7!(v4)$);

  \fill (v1) circle (2pt) node[below left] {$v_1$};
  \fill (v2) circle (2pt) node[below right] {$v_2$};
  \fill (v4) circle (2pt) node[left] {$v_4$};
  \fill (v3) circle (2pt) node[right] {$v_3$};

  \fill (-.55,.3) circle (1.8pt);
  \fill (.55,.28) circle (1.8pt);

  \node[align=center] at (0,-1.45) {(b) $v_3$ is interior};
\end{tikzpicture}
\end{minipage}
\caption{The two triangular-hull cases. Unlabeled points represent later
points. A bad double pair supplies, in particular, the two depicted
later-point witnesses on opposite sides of the dashed line, making the
corresponding single pair good.}
\label{fig:nc-triangular-cases}
\end{figure}

\begin{figure}[ht!]
\centering
\begin{minipage}[t]{.47\linewidth}
\centering
\begin{tikzpicture}[scale=.8,every node/.style={font=\scriptsize}]
  \coordinate (v2) at (-1.3,.95);
  \coordinate (v3) at (1.3,.95);
  \coordinate (v1) at (1.45,-.9);
  \coordinate (v4) at (-1.45,-.9);

  \draw (v2)--(v3)--(v1)--(v4)--cycle;
  \draw[dashed,thick] (v1)--(v2);
  \draw[dashed,thick] (v3)--(v4);

  \fill (v1) circle (2pt) node[below right] {$v_1$};
  \fill (v2) circle (2pt) node[above left] {$v_2$};
  \fill (v3) circle (2pt) node[above right] {$v_3$};
  \fill (v4) circle (2pt) node[below left] {$v_4$};

  \node[align=center] at (0,-1.55)
       {(a) The two diagonal single pairs};
\end{tikzpicture}
\end{minipage}\hfill
\begin{minipage}[t]{.47\linewidth}
\centering
\begin{tikzpicture}[scale=.8,every node/.style={font=\scriptsize}]
  \coordinate (v2) at (-1.3,.95);
  \coordinate (v3) at (1.3,.95);
  \coordinate (v1) at (1.45,-.9);
  \coordinate (v4) at (-1.45,-.9);

  \coordinate (o) at (0,.0754545);

  \coordinate (lefttop) at ($(v2)!-.4!(v1)$);
  \coordinate (righttop) at ($(v3)!-.4!(v4)$);

  \fill[red!6] (o)--(lefttop)--(righttop)--cycle;

  \draw (v2)--(v3)--(v1)--(v4)--cycle;
  \draw[dashed] (lefttop)--(v1);
  \draw[dashed] (righttop)--(v4);
  \draw[very thick] (v2)--(v3);
  \draw[very thick] (v4)--(v1);

  \fill (v1) circle (2pt) node[below right] {$v_1$};
  \fill (v2) circle (2pt) node[left] {$v_2$};
  \fill (v3) circle (2pt) node[right] {$v_3$};
  \fill (v4) circle (2pt) node[below left] {$v_4$};

  \fill (-.45,1.32) circle (1.8pt);
  \fill (.05,0.6) circle (1.8pt);
  \fill (.5,1.38) circle (1.8pt);

  \node[align=center] at (0,-1.55)
       {(b) The complementary pair is good};
\end{tikzpicture}
\end{minipage}
\caption{The convex-position case. Unlabeled points represent later points.
If both diagonal single pairs are bad, all later points lie in one
quarter-plane (shaded), so the complementary boundary pairing is a good
double pair.}
\label{fig:nc-convex-case}
\end{figure}

\begin{lemma}\label{lem:four-point}
For the first four points $v_1,v_2,v_3,v_4$ of a region, the following cases
exhaust all possibilities.
\begin{enumerate}[(1)]
\item The $z_T$-branch always produces a single-pair outcome.
\item If the convex hull of the four points is a triangle, the $x_T$-branch
      and both $y_T$-branches produce double-pair outcomes. Moreover, either
      the $z_T$-branch is a good single-pair outcome or one of the other three
      branches is a good double-pair outcome.
\item If the four points are in convex position, exactly one among the
      $x_T$-branch and the two $y_T$-branches produces a single-pair outcome.
      Either this branch or the $z_T$-branch is a good single-pair outcome, or
      one of the other two branches is a good double-pair outcome.
\end{enumerate}
\end{lemma}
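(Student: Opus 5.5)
We will argue directly from the local rule, treating the two geometric types of four points in general position separately. The later points of the region will be used as ``witnesses'' for goodness. Throughout, we use the fact that the region is convex and that the later points are the points of the region arriving after $v_4$. We also use the convention that a single-pair outcome is good iff each side of the supporting line receives at least one later point (each side already contains one inherited unmatched point). A double-pair outcome is good iff one of its three children contains no later point.

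For (1), in the $z_T$-branch no segment exists before $v_4$, and rule 4 matches $v_4$ to some $u$ so that the two remaining points lie on opposite sides of the line $uv_4$. Such a $u$ always exists: among the three lines $v_4v_i$, some line separates the other two points. In the triangle case take $v_4$ interior, or $v_4$ a hull vertex; in the convex case take the diagonal. Hence exactly one pair is matched and the outcome is single-pair by definition.

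For (2), if the hull is a triangle, any of the pairs $v_1v_2$, $v_1v_3$, $v_2v_3$ leaves two points $u,v_4$ with $uv_4$ not crossing it. Indeed, two disjoint pairs of four points with triangular hull never cross. So rule 3 matches and we get a double pair. For the dichotomy, let $w$ be the interior point and consider the single pair produced in the $z_T$-branch: it is the segment from $v_4$ (or to $v_4$) through the interior point, whose line separates the other two hull vertices (Figure~\ref{fig:nc-triangular-cases}). Suppose all three double pairs are bad, so every child of each contains a later point. We then pick, from the double pair consisting of the $z_T$ segment together with the opposite edge, the two children flanking the $z_T$ segment. These yield later points on both sides of the dashed line, so the $z_T$ single pair is good. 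The step to check carefully is that this double pair is indeed among the three branches (it contains the edge $v_iv_j$ opposite, i.e.\ an $x_T$ or $y_T$ pair) and that its side children lie on opposite sides of the extended line. This follows because the region's partition lines are the extensions of the two segments, one of them being the $z_T$ segment's line.

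For (3), in convex position, label the hull cyclically. Among the three pairs $v_1v_2,v_1v_3,v_2v_3$, exactly one is a diagonal; for it the complementary pair crosses it, so rule 3 does not match and the outcome is single-pair. The other two are hull edges whose complements are the opposite edges, giving double pairs. The $z_T$-branch matches $v_4$ along the other diagonal. So the two diagonal single pairs are this branch and the $z_T$-branch. If both are bad, then each diagonal line has all later points on one side, so all later points lie in one quarter-plane cut out by the two diagonal lines (Figure~\ref{fig:nc-convex-case}). That quarter-plane meets only one side of the quadrilateral. Then one of the two boundary double pairings, the one using the edge on that side and its opposite, has an empty child: the children are bounded by the extended edges, and the quarter-plane lies within a single child. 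I expect this last geometric claim, that a wedge between diagonals lies in one child of the appropriate edge-pairing, to be the main obstacle. I would verify it by noting that the wedge beyond edge $e$ is contained in the open half-plane beyond $e$'s line, which is one child, leaving the middle child and the other outer child empty.
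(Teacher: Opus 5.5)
Parts (1) and (2) follow the paper's proof. In (2), the branch whose first segment is the complementary edge $v_jv_k$ completes with the $z_T$ segment $e$, and its two children on the triangle side of $v_jv_k$ supply later points on both sides of the line of $e$. The gap is in the final step of (3), the one you flagged yourself. Let $o$ be the crossing point of the diagonals and $W$ the open quarter-plane containing all later points. Let $e=ab$ be the edge whose endpoints lie on the two bounding rays of $W$, and $e'$ the opposite edge. You claim $W$ lies inside a single child of the pairing $\{e,e'\}$, namely the one beyond the line of $e$, so that the middle child and the other outer child are both empty. That is false. $W$ is an intersection of two open half-planes whose boundary lines pass through $o$, so it contains the open triangle $oab$. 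That triangle lies on the quadrilateral's side of the line of $e$, i.e.\ in the middle child. Later points inside this triangle are fully compatible with both diagonal single pairs being bad, so the middle child can be nonempty. Your check (``the wedge beyond $e$ lies in the half-plane beyond $e$'') only covers the part of $W$ outside the quadrilateral.

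What is true, and what the paper uses, is that $W$ \emph{avoids} the child beyond the opposite edge $e'$. Take coordinates with origin $o$ and basis $a-o$, $b-o$, so $W=\{\alpha>0,\ \beta>0\}$. The endpoints of $e'$ are the far ends of the two diagonals, at $(-\lambda,0)$ and $(0,-\mu)$ with $\lambda,\mu>0$. Hence the line of $e'$ is $\alpha/\lambda+\beta/\mu=-1$, and $W$ lies strictly on its side containing $o$. The child beyond $e'$ is contained in the open half-plane $\alpha/\lambda+\beta/\mu<-1$, whether $e'$ is matched first or second. (In the second case its extension may stop at the chord through $e$, which only shrinks that child.) So this child contains no later point, and since $v_1,\dots,v_4$ are all matched, it is empty and the branch is a good double pair. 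Replacing your containment claim by this disjointness claim repairs (3); the rest of your argument then matches the paper.
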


\begin{proof}
(1) In the $z_T$-branch, no segment is matched among $v_1,v_2,v_3$. The last
step of the local rule matches exactly one segment incident to $v_4$, so this
branch always produces a single pair.

(2) Suppose first that the convex hull of the four points is a triangle. Then,
every perfect pairing is noncrossing. Hence, the $x_T$-branch and both
$y_T$-branches produce double pairs. Let $e=v_iv_4$ be the segment selected
in the $z_T$-branch, where $\{i,j,k\}=\{1,2,3\}$. Its supporting line
separates $v_j$ and $v_k$, and the branch whose first segment is $v_jv_k$
completes its matching with $e$. Figure~\ref{fig:nc-triangular-cases} uses the
representative labeling $e=v_3v_4$: panel~(a) shows the case in which $v_4$
is the interior point, while panel~(b) shows, up to symmetry, the case in
which an earlier point, represented by $v_3$, is interior.

If this double pair is good, the claim follows. Otherwise, each of its three
children contains a later point. In particular, a later point lies on each
side of the dashed supporting line of $e$, as illustrated in both panels of
Figure~\ref{fig:nc-triangular-cases}. Each side also contains one of the
unmatched points $v_j,v_k$ inherited by the single-pair outcome $e$.
Therefore, both children of this single pair have size at least two, and the
$z_T$-branch is good.

(3) Suppose now that the four points are in convex position and are labeled as in
Figure~\ref{fig:nc-convex-case}. Exactly one of the three perfect pairings
crosses. Therefore, exactly one among the
$x_T$-branch and the two $y_T$-branches produces only one pair. The two
single pairs are the diagonals $v_1v_2$ and $v_3v_4$, shown dashed in
Figure~\ref{fig:nc-convex-case}(a).

If either diagonal single pair is good, the claim follows. If both are bad,
all later points lie in one open quarter-plane determined by their supporting
lines. After a reflection if necessary, this is the shaded quarter-plane in
Figure~\ref{fig:nc-convex-case}(b). One of the two boundary pairings has a
child outside this quarter-plane; the illustrated pairing is
$\{v_2v_3,v_1v_4\}$. That child contains no later point and is therefore
empty. Hence, the corresponding branch is a good double-pair outcome.
\end{proof}

\subsubsection{The optimization problem}

For $T\in\{A,B\}$, let $f_T(r)$ be the supremum, over all fixed $r$-point
inputs assigned to a type-$T$ region, of the expected number of unmatched
points. An inherited unmatched point, when present, is included in this
count. We set $f_T(0)=0$. For example, $f_T(1)=1$ and
$f_T(2) = 2(1 - x_T)$.
The goal is to find constants $c,d_A,d_B$, preferably small $c$, such that
\begin{equation}\label{eq:nc-linear-target}
 f_T(r)\leq cr+d_T,
 \qquad \forall\ T\in\{A,B\},\ r\geq2.
\end{equation}
We will first describe sufficient constraints on the variables
$x_T,y_T,z_T$ and $c,d_T$ for $T\in\{A,B\}$, then we show that a specific
choice of numerical values for these variables gives the improved performance
of our algorithm.

\medskip
\noindent\textbf{(A) The constraints on distributions.}

We require the probability variables to satisfy
\begin{equation}\label{eq:nc-system-prob}
 x_T+2y_T+z_T=1,\qquad x_T\geq y_T\geq0,\qquad z_T\geq0,
 \qquad \forall\ T\in\{A,B\}.
\end{equation}
Here, we require $x_T\geq y_T$ for easier analysis. In particular, this
condition implies that every branch that matches a segment before $v_4$ has
probability at least $y_T$. Among these three branches, at most one produces
a single-pair outcome. Thus, at least two produce double-pair outcomes, and
their total probability is at least $2y_T$.

\medskip
\noindent\textbf{(B) The constraints on the first seven points.}

We already saw $f_T(1)=1$ and $f_T(2)=2(1-x_T)$. 
On a three-point input, an earlier segment is matched with probability
$1-z_T$, leaving one point unmatched, whereas the $z_T$-branch leaves all
three points unmatched. Hence,
\[
 f_T(3)=1(1-z_T)+3z_T=1+2z_T.
\]

For four points, let $\delta_T$ denote the probability of a double-pair
outcome. At most one of the three branches that matches a segment before
$v_4$ can correspond to the crossing perfect pairing. Since each of the other
two branches has probability at least $y_T$, we have
$\delta_T\geq2y_T$.  Consequently,
\[
 f_T(4)
 =2(1-\delta_T)+0\cdot\delta_T \leq2(1-2y_T).
\]

Since a single-pair outcome creates type-$B$ children, define
\begin{equation}\label{eq:nc-q234}
 q_2=2(1-x_B),\qquad
 q_3=1+2z_B,\qquad
 q_4=2(1-2y_B).
\end{equation}
Thus, $f_B(i)\leq q_i$ for $i=2,3,4$. Introduce auxiliary variables
$L_6,L_7,H_7$ satisfying
\begin{equation}\label{eq:nc-base-epigraph}
\begin{aligned}
 L_6&\geq1+q_3, & L_6&\geq2q_2,\\
 L_7&\geq1+q_4, & L_7&\geq q_2+q_3,\\
 H_7&\geq L_7,  & H_7&\geq3.
\end{aligned}
\end{equation}

We next calculate the bounds for five, six, and seven points. For five
points, a single-pair outcome has two type-$B$ children of sizes $(1,2)$, so
its cost (i.e., expected unmatched points) is at most
 $f_B(1)+f_B(2)\leq1+q_2$.
A double-pair outcome matches four points and leaves one point unmatched, so
its cost is one. Using $\delta_T\geq2y_T$ and $q_2\geq0$, we obtain
\[
 f_T(5)
 \leq(1-\delta_T)(1+q_2)+\delta_T
 \leq (1-2y_T)(1+q_2)+2y_T.
\]

For six points, the two type-$B$ child sizes of a single-pair outcome sum to
four. The possible patterns are $(1,3)$ and $(2,2)$, whose costs are at most
$1+q_3$ and $2q_2$, respectively. Thus, the single-pair cost is at most
$L_6$. The three type-$A$ child sizes of a double-pair outcome sum to two, so
the possible nonzero patterns are $(2)$ and $(1,1)$, both of cost at most
two. Moreover, $L_6\geq1+q_3\geq2$. Therefore,
\[
 f_T(6) \leq(1-\delta_T)L_6+2\delta_T
 \leq (1-2y_T)L_6+4y_T.
\]

For seven points, the two type-$B$ child sizes of a single-pair outcome sum
to five. The possible patterns are $(1,4)$ and $(2,3)$, whose costs are at
most $1+q_4$ and $q_2+q_3$, respectively. Thus, the single-pair cost is at
most $L_7$. For a double-pair outcome, the three type-$A$ child sizes sum to
three. Its possible patterns are
 $(3,0,0),(2,1,0),(1,1,1)$,
and their costs are at most
 $f_A(3)=1+2z_A\leq3$, 
 $f_A(2)+f_A(1)=2(1-x_A)+1\leq3$,
 $3f_A(1)=3$, 
respectively.
Hence, a double-pair outcome costs at most three, and either kind of outcome
costs at most $H_7$. The $z_T$-branch, of probability $z_T$, always produces
a single-pair outcome and therefore costs at most $L_7$. Every other branch,
of total probability $1-z_T$, costs at most $H_7$. Consequently,
\[
 f_T(7)\leq z_TL_7+(1-z_T)H_7.
\]

Combining these calculations with \eqref{eq:nc-linear-target}, we impose the
following six base inequalities for each $T\in\{A,B\}$:
\begin{equation}\label{eq:nc-base-system}
\begin{aligned}
 2(1-x_T)&\leq2c+d_T,\\
 1+2z_T&\leq3c+d_T,\\
 2(1-2y_T)&\leq4c+d_T,\\
 (1-2y_T)(1+q_2)+2y_T&\leq5c+d_T,\\
 (1-2y_T)L_6+4y_T&\leq6c+d_T,\\
 z_TL_7+(1-z_T)H_7&\leq7c+d_T.
\end{aligned}
\end{equation}

For the remaining constraints, fix $r\geq8$ and suppose that
\eqref{eq:nc-linear-target} holds for both types at every smaller argument of
size at least two.

\medskip
\noindent\textbf{(C) The constraints on single-pair outcomes.}

A single-pair outcome has two positive type-$B$ child sizes summing to $r-2$.
If the outcome is bad, one child has size one and the other has size $r-3$.
Using $f_B(1)=1$ and the inductive hypothesis for the other child gives
 $f_B(1)+f_B(r-3)
 \leq cr+\bigl(d_B+1-3c\bigr)$.
If the outcome is good, let its child sizes be $r_1,r_2\geq2$, where
$r_1+r_2=r-2$. Applying the inductive hypothesis to both children gives
 $f_B(r_1)+f_B(r_2)
 \leq(cr_1+d_B)+(cr_2+d_B)
 =cr+\bigl(2d_B-2c\bigr)$.
Accordingly, define the bad and good single-pair intercepts by
\begin{equation}\label{eq:nc-single-intercepts}
 S_b=d_B+1-3c,\qquad S_g=2d_B-2c.
\end{equation}

\medskip
\noindent\textbf{(D) The constraints on double-pair outcomes.}

A double-pair outcome has three type-$A$ child sizes summing to $r-4$. A bad
double-pair outcome has no empty child. We distinguish the possible numbers
of singleton children.

If there are two singleton children, the third child has size $r-6\geq2$,
and the cost is at most
 $1+1+f_A(r-6)
 \leq cr+(d_A+2-6c)$.
If there is exactly one singleton child, the other two children have sizes
$r_1,r_2\geq2$ with $r_1+r_2=r-5$. Their total cost is at most
 $1+f_A(r_1)+f_A(r_2) \leq cr+(2d_A+1-5c)$.
If there is no singleton child, all three child sizes $r_1,r_2,r_3$ are at
least two and sum to $r-4$. Thus,
    $f_A(r_1) + f_A(r_2) + f_A(r_3) \le cr+(3d_A-4c)$.
We therefore introduce $D_b$ satisfying
\begin{equation}\label{eq:nc-bad-double-intercepts}
 D_b\geq d_A+2-6c,\qquad
 D_b\geq2d_A+1-5c,\qquad
 D_b\geq3d_A-4c.
\end{equation}

A good double-pair outcome has at least one empty child. If two children are
empty, the remaining child has size $r-4$, and its cost is at most
 $f_A(r-4)\leq cr+(d_A-4c)$.
If exactly one child is empty and one of the remaining children is a
singleton, the last child has size $r-5$, and the cost is at most
 $f_A(1)+f_A(r-5) \leq cr+(d_A+1-5c)$.
Finally, if exactly one child is empty and the other two have sizes
$r_1,r_2\geq2$ with $r_1+r_2=r-4$, then
    $f_A(r_1)+f_A(r_2) \leq cr+(2d_A-4c)$.
We therefore introduce $D_g$ satisfying
\begin{equation}\label{eq:nc-good-double-intercepts}
 D_g\geq d_A-4c,\qquad
 D_g\geq d_A+1-5c,\qquad
 D_g\geq2d_A-4c.
\end{equation}

\medskip
\noindent\textbf{(E) The ordering constraints.}

We impose
\begin{equation}\label{eq:nc-order}
 D_b\geq S_b\geq S_g\geq D_g.
\end{equation}
This ordering also lets
us replace the probability of a favorable branch by its lower bound $y_T$:
the remaining probability can be assigned to the more expensive category
$D_b$, producing a valid upper bound.

\medskip
\noindent\textbf{(F) The recurrence constraints.}

We now translate the structured alternatives in \cref{lem:four-point} into
constraints. In every calculation below, the coefficients of $cr$ sum to
one, so the common term is exactly $cr$.

First, suppose that the four points have a triangular convex hull and the
$z_T$-branch is a good single-pair outcome. Every other branch is a
double-pair outcome, which we upper-bound as bad. The expected cost is at
most
\[
\begin{aligned}
 z_T(cr+S_g)+(1-z_T)(cr+D_b)
 &=cr+z_TS_g+(1-z_T)D_b.
\end{aligned}
\]

Second, suppose that the triangular-hull $z_T$-branch is bad. By
\cref{lem:four-point}, some other branch is a good double-pair outcome. Let
$p\geq y_T$ be the probability of such a branch. Charging all remaining
branches as bad double pairs gives
\[
 z_T(cr+S_b)+p(cr+D_g)+(1-z_T-p)(cr+D_b)
 \leq cr+z_TS_b+y_TD_g+(1-z_T-y_T)D_b,
\]
where the inequality follows from $D_g\leq D_b$ and $p\geq y_T$.

Third, suppose that the four points are in convex position and the
$z_T$-branch is a good single-pair outcome. Let $p\geq y_T$ be the
probability of the unique other single-pair branch. Charging that branch as
bad and the two remaining branches as bad double pairs gives
\[
    z_T(cr+S_g)+p(cr+S_b)+(1-z_T-p)(cr+D_b)
    \leq cr+z_TS_g+y_TS_b+(1-z_T-y_T)D_b,
\]
where we use $S_b\leq D_b$.

Fourth, suppose that the unique non-$z_T$ single-pair branch is good, while
the $z_T$-branch is bad. If the good branch has probability $p\geq y_T$,
then
\[
 z_T(cr+S_b)+p(cr+S_g)+(1-z_T-p)(cr+D_b)
 \leq cr+z_TS_b+y_TS_g+(1-z_T-y_T)D_b,
\]
where we use $S_g\leq D_b$.

Finally, suppose that both single-pair outcomes are bad and some double-pair
branch is good. Let $p_s\geq y_T$ be the probability of the non-$z_T$
single-pair branch and let $p_g\geq y_T$ be the probability of a good
double-pair branch. Then the expected cost is at most
\[
\begin{aligned}
 &z_T(cr+S_b)+p_s(cr+S_b)+p_g(cr+D_g)  +(1-z_T-p_s-p_g)(cr+D_b)\\
 &=cr+(z_T+p_s)S_b+p_gD_g+(1-z_T-p_s-p_g)D_b\\
 &\leq cr+(z_T+y_T)S_b+y_TD_g+(1-z_T-2y_T)D_b.
\end{aligned}
\]
The last inequality follows from $S_b\leq D_b$, $D_g\leq D_b$, and
$p_s,p_g\geq y_T$.

Consequently, for each $T\in\{A,B\}$, we impose the five recurrence
inequalities
\begin{equation}\label{eq:nc-recurrence-system}
\begin{aligned}
 R_1(T)&=z_TS_g+(1-z_T)D_b\leq d_T,\\
 R_2(T)&=z_TS_b+y_TD_g+(1-z_T-y_T)D_b\leq d_T,\\
 R_3(T)&=z_TS_g+y_TS_b+(1-z_T-y_T)D_b\leq d_T,\\
 R_4(T)&=z_TS_b+y_TS_g+(1-z_T-y_T)D_b\leq d_T,\\
 R_5(T)&=(z_T+y_T)S_b+y_TD_g
              +(1-z_T-2y_T)D_b\leq d_T.
\end{aligned}
\end{equation}
All coefficients in \eqref{eq:nc-recurrence-system} are nonnegative by
constraint set~(A). 

A tuple
\[
 (x_A,y_A,z_A,x_B,y_B,z_B,c,d_A,d_B,
  q_2,q_3,q_4,L_6,L_7,H_7,S_b,S_g,D_b,D_g)
\]
is called \emph{feasible} if it satisfies the constraint sets
\textbf{(A)}--\textbf{(F)} above. The optimization problem is to minimize $c$
over all feasible tuples.

\begin{proposition}\label{prop:nc-certificate}
Every feasible tuple satisfies
 $f_T(r)\leq cr+d_T$,
for $T\in\{A,B\}$ and $r\geq2$.
\end{proposition}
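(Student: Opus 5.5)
We argue by strong induction on $r$, simultaneously for both types $T\in\{A,B\}$. The base range $2\le r\le 7$ is handled directly by the computations in constraint set (B). The derivations there give $f_T(2)=2(1-x_T)$, $f_T(3)=1+2z_T$, $f_T(4)\le 2(1-2y_T)$, and the stated upper bounds for $f_T(5),f_T(6),f_T(7)$ in terms of $q_2,q_3,q_4,L_6,L_7,H_7$. We first check that $f_B(i)\le q_i$ for $i=2,3,4$, which is the case $T=B$ of the first three formulas. The six base inequalities \eqref{eq:nc-base-system} then yield $f_T(r)\le cr+d_T$ for $r\le7$. For $r=5,6$, the bound uses $\delta_T\ge 2y_T$ together with the facts that a single-pair outcome costs at least as much as a double-pair one ($1+q_2\ge1$ and $L_6\ge2$), so shifting probability mass toward single-pair outcomes only increases the bound. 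For $r=7$, the bound uses the fact that the $z_T$-branch always produces a single pair (Lemma \ref{lem:four-point}(1)), with $H_7\ge\max(L_7,3)$ covering every other branch.

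For $r\ge8$, we assume the bound for both types at all sizes $2\le r'<r$. We fix the $r$ input points of a type-$T$ region. The first four points $v_1,\dots,v_4$ determine which configuration of Lemma \ref{lem:four-point} occurs. Each branch yields either a single-pair outcome, whose two type-$B$ children have positive sizes summing to $r-2$, or a double-pair outcome, whose three type-$A$ children have sizes summing to $r-4$. Since the children use independent randomness and receive fixed point sets, the expected cost of a branch is the sum of the $f$-values of its children. Parts (C) and (D) show, via the induction hypothesis and $f(1)=1$, $f(0)=0$, that a bad single pair costs at most $cr+S_b$, a good single pair at most $cr+S_g$, a bad double pair at most $cr+D_b$, and a good double pair at most $cr+D_g$. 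Here we must check that every child to which we apply the hypothesis has size between $2$ and $r-1$. This holds because $r\ge8$: for instance, $r-6\ge2$ in the case of two singleton children. We also check that the case list is exhaustive. A bad single pair has a child of size at most $1$, and since children are positive this means size exactly $1$. The double-pair cases are enumerated by the number of empty and singleton children.

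Next we combine the branch probabilities according to the five cases of Lemma \ref{lem:four-point}. In each case, the total expected cost is $cr$ plus a convex combination of intercepts, where the branches known to be favorable have probability $p\ge y_T$ by (A). Using the ordering (E), $D_b\ge S_b\ge S_g\ge D_g$, we move any excess probability beyond $y_T$ of a favorable branch onto the largest intercept $D_b$. This shows that the combination is at most the corresponding $R_i(T)$. The recurrence constraints \eqref{eq:nc-recurrence-system} then give $R_i(T)\le d_T$. Taking the supremum over all inputs yields $f_T(r)\le cr+d_T$, completing the induction.

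The main obstacle is ensuring that the case split really covers every outcome of Lemma \ref{lem:four-point} with the correct attribution of branches. In the triangular case the $z_T$-branch is either good, or bad with a good double pair elsewhere. In the convex case, one of the two single pairs is good, or both are bad and some double pair is good. In particular, the unbounded "bad" categories must be assigned the largest intercepts consistently with (E). A secondary subtlety is that an inherited point counts as the first point of a child region; since $f$ is defined as a supremum over all inputs including the inherited point, this is handled automatically.
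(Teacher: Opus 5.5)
Your proposal is correct and follows the paper's proof. The base range $2\le r\le 7$ comes from the computations in (B) together with the six base inequalities, and for $r\ge 8$ a simultaneous strong induction over both types uses the intercept bounds from (C)--(D), the ordering (E), and the five alternatives of Lemma~\ref{lem:four-point} encoded as $R_1(T),\dots,R_5(T)\le d_T$ in (F). The extra checks you spell out (children of size in $[2,r-1]$ when $r\ge 8$, exhaustiveness of the child-size patterns) are details the paper leaves implicit. One small correction: the expected cost of a branch is \emph{at most} the sum of the children's $f$-values, not equal to it.
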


\begin{proof}
The calculations in constraint set~\textbf{(B)}, together with the six base
inequalities in \eqref{eq:nc-base-system}, prove the assertion for both types
and every $2\leq r\leq7$.

Now let $r\geq8$ and assume simultaneously that the assertion holds for both
types at every smaller argument of size at least two. By the calculations in
constraint sets~\textbf{(C)} and~\textbf{(D)}, a bad single-pair, good
single-pair, bad double-pair, and good double-pair outcome has expected cost
at most
\[
 cr+S_b,\qquad cr+S_g,\qquad cr+D_b,\qquad cr+D_g,
\]
respectively. The exact cost of every singleton child has already been
included in these expressions.

By \cref{lem:four-point}, every configuration satisfies one of the five
alternatives calculated in constraint set~\textbf{(F)}. For the applicable
index $j$, \eqref{eq:nc-recurrence-system} gives $R_j(T)\leq d_T$.
Hence,
 $f_T(r)\leq cr+R_j(T)\leq cr+d_T$.
\end{proof}

\begin{theorem}\label{thm:nc-randomized}
    The asymmetric randomized algorithm proposed in Section \ref{sec:rand_alg} has an asymptotic competitive ratio $\frac{33517}{50000}=0.67034$
\end{theorem}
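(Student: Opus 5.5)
The plan is to exhibit an explicit feasible tuple for the optimization problem of constraint sets (A)--(F) with $c = 1-\frac{33517}{50000} = \frac{16483}{50000}$, and then invoke Proposition~\ref{prop:nc-certificate}. That proposition gives $f_A(n)\le cn+d_A$ for the root region, which is of type $A$. Since $f_A(n)$ is the supremum over inputs of the expected number of unmatched points, every input $\cI$ of $n$ points satisfies
\[
\E[\ALG(\cI)] \ge n - cn - d_A = (1-c)\OPT(\cI) - d_A .
\]
Here we use $\OPT(\cI)=n$. This is exactly an asymptotic competitive ratio of $1-c=0.67034$ with additive constant $C=d_A$, covering small $n$ as well, since $f_A(0)=0$ and $f_A(1)=1\le c+d_A$ once $d_A\ge 1-c$.

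First, I would pick the numbers. I would fix rational values $x_A,y_A,z_A,x_B,y_B,z_B,d_A,d_B$. Natural starting points are those of Kamali et al.\ ($x=1/2$, $y=z=1/6$), perturbed asymmetrically, with type $A$ favouring double pairs and type $B$ tuned for the small cases. The auxiliary variables are then determined by their lower-bound constraints:
\begin{itemize}
\item $q_2,q_3,q_4$ are given by \eqref{eq:nc-q234};
\item $L_6=\max(1+q_3,2q_2)$, $L_7=\max(1+q_4,q_2+q_3)$, and $H_7=\max(L_7,3)$;
\item $S_b,S_g$ are given by \eqref{eq:nc-single-intercepts};
\item $D_b$ and $D_g$ are the maxima in \eqref{eq:nc-bad-double-intercepts} and \eqref{eq:nc-good-double-intercepts}.
\end{itemize}
Taking the auxiliaries at their minimal admissible values is without loss of generality, because every constraint in which they appear on the other side is monotone in the right direction. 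The constraints of (E) and (F) are exactly the ones that require the auxiliaries as small as possible.

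Second, I would verify the constraints group by group:
\begin{itemize}
\item (A) is immediate.
\item The six base inequalities \eqref{eq:nc-base-system} for each type are direct arithmetic.
\item For the ordering (E), check $D_b\ge S_b\ge S_g\ge D_g$. This amounts to comparing linear expressions in $d_A,d_B,c$, for instance $S_b\ge S_g \iff d_B\le 1-c$.
\item Finally, check the ten recurrence inequalities $R_j(T)\le d_T$. Since $D_b$ is the largest intercept, each $R_j(T)$ is $D_b$ minus nonnegative savings, so the binding ones are those with the least savings. These are typically $R_2$ and $R_4$ for type $B$, and $R_1$ for type $A$.
\end{itemize}
I would present these checks as a table of the exact rational values of every quantity and the slack of every constraint, so that the reader can verify them mechanically.

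The main obstacle is finding the numerical tuple itself. The problem is nonlinear because of the products $z_TS_g$, $y_TD_b$, etc., so it is not a plain LP. I would first fix the probabilities and solve the resulting LP in $(c,d_A,d_B)$, which is linear once the $\max$'s are split into cases. I would then do a coarse grid or local search over the six probabilities, and finally round to rationals with small denominators while keeping every constraint satisfied with nonnegative slack. The stated value $33517/50000$ suggests that $c$ was rounded up slightly from the true optimum, precisely so that all inequalities hold strictly after rounding. Should some inequality turn out tight, I would nudge $d_T$ upward, which costs nothing asymptotically.
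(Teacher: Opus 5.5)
You follow the same route as the paper: exhibit a tuple satisfying (A)--(F) with $c=16483/50000$, then apply Proposition~\ref{prop:nc-certificate}. Your passage from $f_A(n)\le cn+d_A$ to $\E[\ALG(\cI)]\ge(1-c)\OPT(\cI)-C$ is fine. \textbf{The gap is that you never produce the tuple.} Proposition~\ref{prop:nc-certificate} already says that \emph{every} feasible tuple gives ratio $1-c$. So the whole content of this theorem is that a feasible tuple exists at this particular $c$. Your search plan (fix the probabilities, solve the LP in $(c,d_A,d_B)$, grid-search, round) gives no assurance that it succeeds at $c=16483/50000$ rather than at some larger $c$.

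The missing ingredient is the certificate, which the paper supplies:
\begin{itemize}
\item probabilities $x_A=193567/500000$, $y_A=40101/200000$, $z_A=13241/62500$, $x_B=1/2$, $y_B=37163/250000$, $z_B=25337/125000$;
\item intercepts $d_A=35401/62500$ and $d_B=285931/500000$;
\item auxiliaries at their minimal values: $q_2=1$, $q_3=q_4=87837/62500$, $L_6=L_7=150337/62500$, $H_7=3$, $S_b=291441/500000$, $S_g=121101/250000$, $D_b=73557/125000$, $D_g=-20471/250000$.
\end{itemize}
The proof consists of checking (A)--(F) for these rationals.

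This matters because the feasible region is extremely thin, and your fallback of nudging $d_T$ upward does not work. The recurrence constraints couple $d_A$ and $d_B$ with coefficients larger than one. Take $D_b$ at its minimum $d_A+2-6c$ (the active term when $d_A<1-c$), and recall $S_g=2d_B-2c$. Then $R_1(A)\le d_A$ becomes $d_B\le d_A/2+c-(1-z_A)(2-6c)/(2z_A)$. Likewise $R_4(B)\le d_B$ becomes $x_Bd_B\ge(1-z_B-y_B)d_A+z_B(1-3c)-2y_Bc+(1-z_B-y_B)(2-6c)$. Raising $d_B$ endangers the first constraint, and raising $d_A$ endangers the second.

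For the paper's probabilities, these two constraints together with $2(1-x_A)\le 2c+d_A$ confine $d_A$ to roughly $[0.566412,\,0.566436]$. Six constraints hold with slack below $10^{-5}$: the first and sixth base inequalities for type $A$, the sixth for type $B$, and $R_1(A)$, $R_4(A)$, $R_4(B)$. Exact rational verification is therefore unavoidable.

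Smaller points:
\begin{itemize}
\item Choosing $D_b$ minimal is not automatically harmless, since $D_b\ge S_b$ in (E) pushes $D_b$ up. It holds here only because $D_b-S_b=1-3c+d_A-d_B>0$.
\item Your guess of the binding recurrences is off: $R_2(B)$ has slack of about $0.08$, whereas $R_4(A)$ is tight.
\item For the paper's tuple $d_A<1-c$, so $f_A(1)\le c+d_A$ fails. This is harmless only because $\OPT=0$ when $n=1$, or because one can enlarge $C$.
\end{itemize}
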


\begin{proof}
Choose the probability parameters
\begin{equation}\label{eq:nc-prob-certificate}
\begin{array}{c|ccc}
T&x_T&y_T&z_T\\ \hline
A&\dfrac{193567}{500000}
 &\dfrac{40101}{200000}
 &\dfrac{13241}{62500}\\[2mm]
B&\dfrac12
 &\dfrac{37163}{250000}
 &\dfrac{25337}{125000}
\end{array}
\end{equation}
and choose
\begin{equation}\label{eq:nc-potential-certificate}
 c=\frac{16483}{50000},\qquad
 d_A=\frac{35401}{62500},\qquad
 d_B=\frac{285931}{500000}.
\end{equation}
For the auxiliary variables, take
\[
\begin{gathered}
 q_2=1,\qquad
 q_3=q_4=\frac{87837}{62500},\qquad
 L_6=L_7=\frac{150337}{62500},\qquad H_7=3,\\
 S_b=\frac{291441}{500000},\qquad
 S_g=\frac{121101}{250000},\qquad
 D_b=\frac{73557}{125000},\qquad
 D_g=-\frac{20471}{250000}.
\end{gathered}
\]
One can directly verify that these values satisfy all the constraints in
\textbf{(A)}--\textbf{(F)}. Thus, they form a feasible tuple. Proposition \ref{prop:nc-certificate} then finishes the proof.

\end{proof}

\subsection{A \texorpdfstring{$6/7$}{6/7} upper bound for randomized algorithms}

\begin{theorem}\label{thm:nc-random-upper}
No randomized online algorithm for noncrossing matching has asymptotic
competitive ratio greater than $6/7$.  This holds even when all input points
lie on one circle.
\end{theorem}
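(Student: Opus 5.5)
The plan is to use Yao's minimax principle. It suffices to exhibit a probability distribution over online inputs, with all points on a fixed circle, such that every \emph{deterministic} online algorithm matches in expectation at most a $6/7$ fraction of the points, up to an additive constant independent of $n$. The circle is convenient for two reasons. A segment between two points on the circle is a chord, and two chords cross exactly when their endpoints interleave, so the geometry reduces to a purely combinatorial condition on cyclic orders. Moreover, a chord $uv$ splits the remaining arc-space into two independent sub-instances: points on different sides of $uv$ can never be matched afterwards. In particular, any side of a chord that ends up with an odd number of points forces at least one unmatched point. The adversary's whole strategy will exploit this parity loss.

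First I will design a constant-size random \emph{gadget} living inside an arc $I$ of the circle, used in the following way. Two points $a,b$ arrive in $I$. The algorithm either matches them or not.
\begin{itemize}
\item If $ab$ is matched, the continuation places further points so that the two sides of the chord $ab$ (within $I$) receive an odd number of points each. This forces a loss on both sides.
\item If $ab$ is not matched, the continuation places points so that leaving $a,b$ unmatched is costly. For example, the next points are placed in positions such that any chord saving one of $a,b$ separates an odd set.
\end{itemize}
Crucially, the continuation is chosen at random, independently of the algorithm's decision. I will mix two or three continuation types: one where all later points lie on one side of $ab$, so matching $ab$ was correct, and one where they straddle $ab$. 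The mixing probabilities are tuned so that, whichever deterministic action the algorithm takes at each step, the expected number of lost points is at least $1/7$ of the gadget's expected size. I expect a gadget of about seven points, with a second decision point after the third point arrives, analysed as a small decision tree whose leaves are scored by parity of the sides. The mixing weights come from solving a small linear program, the randomized analogue of the deterministic $2/3$ argument of Bose et al.

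Second, to get an asymptotic statement I will chain gadgets recursively. After a gadget finishes, the adversary picks, inside the region left by the algorithm's chords, an arc that contains no still-unmatched gadget point (or at most one), and starts a fresh independent gadget there. The arcs of different gadgets are separated by chords, so by the independence noted above the expected losses add. Summing over $m$ gadgets gives expected matched points at most $\tfrac{6}{7}n + O(1)$. The $O(1)$ absorbs the single leftover point that may be carried between gadgets, handled exactly like the inherited point in Section~\ref{sec:rand_alg}. Because the adversary is oblivious, the placement of later gadgets must not depend on the algorithm's choices. To handle this I will place the gadgets on pre-fixed nested small arcs, each lying inside an arc guaranteed to be free, for instance always near a fixed reference point between consecutive gadget points. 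I will also check that any chord the algorithm draws between different gadgets only hurts it.

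The main obstacle is the gadget itself. I need to find positions and mixing probabilities for which every branch of the deterministic decision tree loses at least a $1/7$ fraction in expectation, and to verify exhaustively that no clever cross-gadget or early-point strategy beats this. I would first attempt the search by hand on a five-to-seven point cyclic pattern, computing the loss of each deterministic policy as a linear function of the mixing probabilities. I would then equalize the losses of the best two policies, expecting the optimum to be exactly $1/7$.
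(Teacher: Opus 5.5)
Your framework (Yao's principle, points on a circle, parity losses on the sides of chords) is the right one, but the argument has a genuine gap in two places.

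First, the gadget is never constructed, and it carries the entire quantitative content of the theorem. You give no positions, no mixing probabilities and no computation, and $1/7$ is only the value you hope an LP will return.

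Second, and more fundamentally, the aggregation step fails. Because the adversary is oblivious, each new gadget sits in a pre-fixed arc, and that arc lies in some face of the chords drawn so far. Faces of a noncrossing chord set are convex, so any unmatched earlier point in that face can be joined to any new point without crossing anything. The algorithm can always arrange for such a point to exist, e.g.\ by leaving an endpoint of the gap containing that arc unmatched. Thus a parity loss forced in a face that later receives another gadget is not final: the algorithm can carry the odd point forward and use it to fix the next gadget's parity. Per-gadget losses therefore do not add. Your claim that cross-gadget chords only hurt the algorithm is false in general: a one-point gadget loses its point in isolation, yet $m$ copies placed clockwise along an arc lose at most one point in total. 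For the same reason your $O(1)$ treatment of the carried point is circular, since the point carried out of each gadget may be exactly that gadget's forced loss.

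The paper avoids both issues by using a single global random phase and counting losses per face at the end. Reveal $h$ seed points clockwise. With probability $1/5$ stop; otherwise, independently for each of the $h$ open arcs between consecutive seeds, place one point in it with probability $1/2$. A deterministic algorithm cannot tell whether the input continues. Whatever $\ell$ chords it draws on the seeds, they split the disk into $\ell+1$ faces, and each face contains at least one of these arcs, because matching chords share no endpoints and so no face is bounded by chords alone. Points in different faces can never be matched. Each face's count of available points contains an independent fair bit, so it is odd with probability $1/2$. Hence the expected loss is at least $\frac15(h-2\ell)+\frac45\cdot\frac{\ell+1}{2}=\frac{h+2}{5}$. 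This is independent of $\ell$, which is exactly what the weights $1/5,4/5$ are tuned for. Against $\frac{7h}{5}$ expected points, this gives $6/7$ as $h\to\infty$, with no gadget search needed.
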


\begin{proof}
We apply Yao's principle; the task is to construct an oblivious random input with all points on a circle. Let $h\geq2$ be even. Place $h$ seed points $v_1,\ldots,v_h$ clockwise on a circle and reveal them first in that order. With probability $1/5$, these are all the points, and with probability $4/5$, the $h$ points will be extended randomly as follows. For each open circular arc $A_i$ from
$v_i$ to $v_{i+1}$, independently toss a fair bit $B_i$.  If $B_i=1$, place
one point $w_i$ in $A_i$. Reveal $w_i$ successively.

Fix a deterministic algorithm and suppose it matches $\ell$ segments on the $h$ seed points.  The $\ell$ noncrossing chords divide the disk into $\ell+1$ regions, which we call faces.
Every face contains at least one open arc $A_i$. 
Now, with probability $1/5$, the instance has ended and the algorithm has $h-2\ell$ unmatched points. With probability $4/5$, the instance continues with random points $w_i$. We will show the following

\smallskip

\noindent Claim: at the end of the instance, every face contains in expectation at least $1/2$ unmatched points. 

\smallskip

Assume the Claim. By linearity of expectation, the number of unmatched points is in expectation at least $(\ell+1)/2$. Hence, overall, the expected number of unmatched points is at least
   $\frac15 \cdot (h-2\ell)+\frac45 \cdot \frac{\ell+1}{2}
       =\frac{h+2}{5}$.
Since the expected number of online points is $\frac{1}{5} \cdot h + \frac{4}{5} \cdot \frac{3h}{2} = 7h/5$, and $\OPT$ can match all of them (let us ignore the parity issue which is insignificant to the analysis), the number of unmatched points is at least asymptotically $1/7$ as claimed.  

It remains to prove the Claim. For a face $F$, let $u_F$ be its number of unmatched seed
points and let $J_F$ be its nonempty set of boundary-arc indices.  Hence, the number of points available for further matching in $F$ is 
    $u_F+\sum_{i\in J_F}B_i$.
Observe that this number is even or odd with equal probability. In the case when it is odd, trivially, at least one point cannot be matched. This proves the Claim.
\end{proof}

\section{Open problems}\label{sec:open}

Our work leaves many problems open. We list, in particular, the following:

\begin{itemize}
    \item For online $k$-crossing matching with $k\ge 1$, is there a deterministic algorithm with asymptotic competitive ratio strictly greater than $2/3$?

    \item For online triangle-free-crossing matching, does Greedy have asymptotic competitive ratio at least $2/3$? We believe it does. This is equivalent to the following.

\begin{conjecture}  \label{conj:main_conj}
    For every $\ell\in \N$, $\beta_\triangle(\ell) \le \ell$.
\end{conjecture}

    Perhaps the more general version of $\beta_{\mathsf P}(\ell)$ is also worth studying, such as when ${\mathsf P}$ is the $K_t$-free property, i.e., no $t$ line segments pairwise crossing. 

    \item Prove or disprove Conjecture \ref{conj:vtx_conj}.

    \item Is there a deeper connection between $\beta_\triangle$ and $R(3,t)$? An interesting feature of our constructions  in Section \ref{sec:constructions} is that larger constructions share simpler admissible gadgets, essentially only those listed in Figure \ref{fig:gadgets}, of size 3, 4, 5, 8, respectively. Incidentally, the gaps between Ramsey numbers also satisfy $R(3,t) - R(3,t-1) \in \{3, 4,5,8\}$ for $3 \le t \le 10$ (either $R(3,10) = 40$ or $41$). Is this a coincidence? 
    
    This also points to the question of better understanding of the sequence of Ramsey gaps $g(t) := R(3,t) - R(3,t-1)$. By Kim \cite{kim1995ramsey}, $R(3,t) = \Theta(t^2/\log t)$, hence, the Ramsey gap has average of order $\Theta(t/\log t)$. Is there any pattern for the sequence $g(t)$, and does it connect to a possible sequence of basic building blocks (e.g., those admissible simple gadgets) in our geometric problem?  One may also ask the Ramsey gap problem for  general Ramsey numbers.

    \item For online noncrossing matching, what is the optimal competitive ratio of randomized algorithms? 
\end{itemize}

\section*{Declaration of AI Usage}
No AI tool is used for the proposal of the model and analysis of the results for matching with limited crosses, i.e., Section \ref{sec:k-crossing} and Section \ref{sec:triangle-free}.

The AI Tool ChatGPT is used to develop the results in Section \ref{sec:noncrossing}. Initially, AI's solution and presentation for Section \ref{sec:rand_alg} are filled with optimization with specific numerical numbers.    The authors have changed the presentation to its current form: a structure lemma followed by the optimization problem with a clear separation of different groups of constraints, and replaced specific numerical values by variables. We hope this facilitates the understanding. The authors have verified the proofs and take full responsibility for the contents of this article.

\medskip
{\noindent\bf Acknowledgment.} The authors thank the reviewers' advice that helps improve the presentation.

\bibliographystyle{alpha}
\bibliography{mybib}

\end{document}